\documentclass[%
 reprint,
superscriptaddress,
 amsmath,amssymb,
 aps,
pra,
longbibliography
]{revtex4-2}

\usepackage{graphicx}% Include figure files
\usepackage{dcolumn}% Align table columns on decimal point
\usepackage{bm}% bold math
\usepackage{hyperref}% add hypertext capabilities

\usepackage{cleveref}
\usepackage{{comment}}
\Crefname{equation}{Eq.}{Eqs.}

\usepackage{amsthm}
\usepackage{thmtools}

\newtheorem{lemma}{Lemma}

\usepackage{color}

\usepackage[normalem]{ulem}
\DeclareRobustCommand\mpwS[1]{{\let\helpcmd\@firstofone\parhelp#1\par\relax\relax} }
\long\def\parhelp#1\par#2\relax{%
	\helpcmd{#1}\ifx\relax#2\else\par\parhelp#2\relax\fi%
}

\newcommand{\mpwh}[1]{{\unskip}}%\hspace{-0.17cm}}}

\begin{document}

\preprint{APS/123-QED}

\title{Measuring Clock Precision Without an Ideal Time Reference\\
%Precisely determining a clock's precision without a more precise reference clock\\
%Precisely determining a clock's precision without precise reference clocks
}

\author{Pablo Á. Domínguez}
 \affiliation{University Grenoble Alpes, Grenoble, France}
 \affiliation{Institute for Theoretical Physics, ETH Zürich, Switzerland}

\author{Christopher Chubb}
\affiliation{Independent Researcher}

\author{Mischa P. Woods}
\affiliation{ENS Lyon, Inria, France}
\affiliation{University Grenoble Alpes, Inria, Grenoble, France}

%\collaboration{MUSO Collaboration}%\noaffiliation

\renewcommand{\thesubsection}{\thesection.\Alph{subsection}}
\renewcommand{\theHequation}{\theHsection.\arabic{equation}}
\makeatletter
\renewcommand{\p@subsection}{}
\makeatother

%\collaboration{CLEO Collaboration}%\noaffiliation

\date{\today}% It is always \today, today,
             %  but any date may be explicitly specified

% Include institute/department
% Eu: ETHZ, UGA, (Inria?)
% Christopher: ETHZ
% Mischa: Inria, UGA, ENS Lyon

\begin{abstract}
Characterising physical devices typically requires comparison with more precise reference standards. For clocks, however, this approach is problematic because no physically perfect time reference exists; real clocks have finite, model-dependent uncertainty. We address whether multiple imperfect clocks can collectively determine their precision without an absolute reference. This paper provides analytic bounds and protocols for quantifying clock precision using only multiple copies of the same clock, which need not be synchronised and may have manufacturing variations or environmental coupling. Beyond clock characterisation, these methods enable precise measurements of sub-period intervals and relativistic effects such as time dilation and Doppler shifts, providing tools for quantum metrology and fundamental physics.
\end{abstract}

%\keywords{Suggested keywords}%Use showkeys class option if keyword
                              %display desired
\maketitle

\section{Introduction}
% Measuring quantities (length, electricity...) leads to tolerances and errors
The concept of measurement forms the bedrock of physics, serving as our primary means of extracting quantifiable information from the natural world to build predictive models and theoretical frameworks. When conducting any measurement, determining the associated uncertainty is essential, accounting for both instrumental limitations and methodological constraints that define the measurement's precision. Consider the straightforward example of measuring distance with a ruler: the uncertainty in the result depends on how precisely the ruler was calibrated during manufacturing, the resolution of its smallest markings, and any systematic errors introduced during the measurement process. This fundamental relationship between measurement and uncertainty underscores the foundation of experimental physics, where the validity of scientific claims must be evaluated within the context of their quantifiable precision limits \cite{giovannetti2011advances, polino2020photonic}.

% Same happens with time. Definition of a clock (just citing your paper about the axiomatic principles plus some short intuitive description, nothing too long)
Time, like any other physical quantity, must be accessed through measurement, requiring specialised instruments known as clocks.
In the context of this paper, a clock is defined broadly as any device capable of autonomously transmitting information about the passage of time to the environment in the form of a sequence of discrete events, which we will refer to as ``ticks''. Due to the quantum nature of time, the tick times are stochastic and can be modelled as random variables with a well-defined mean $\mu$ and finite variance $\sigma^2$. We therefore characterise clocks by the statistics of their inter-tick intervals \cite{Woods_2021,Silva2023TickingClocks}. This inclusive definition encompasses not only conventional timepieces but also specialised instruments such as timers, stopwatches and metronomes. More formally, we will consider a clock to be any physical system that satisfies axioms (1) through (4) of \cite{Woods_2021}, establishing a rigorous mathematical foundation for our subsequent analysis of temporal measurement.

% People want more and more accurate clocks, but there is not an ideal reference frame. In the light of this, we have tried to get as close as possible to a perfect time reference (proof of this is the redefinition of a second in terms of more and more precise oscillators) but never achieved (they require infinite energy; not only in practice but in energy as well due to uncertainty principle)
As humans, we rely on time measurement for countless activities, and for many of these, measurement precision becomes critical. Throughout history, humanity has developed increasingly sophisticated timekeeping devices, as reflected, for example, in the evolution of how we define the second, progressing from pendulums to atomic transitions. However, most current methods for operationally determining the precision of timekeeping devices face one of two fundamental limitations: either they require comparison with a reference clock that is more precise than the clock under test, or they require a mechanism to perform phase measurements or locking \cite{natureClocks}. The former creates a circular dependency, since all clocks ultimately seem to have to be validated against other clocks \cite{meier2023fundamental, Dawkins2007FreqCounter}. Moreover, the latter restricts our ability to measure the precision of the timekeeping device to the precision of the phase measurement or locking mechanism. Because the \textit{phase} (e.g. as in the Pegg-Barnett formalism \cite{pegg1989phase}) and number operators in quantum mechanics do not commute, this will always be tied to a non-zero uncertainty \cite{vaccaro2003superpositions}. 

The Allan deviation \cite{Allan1981ModifiedAllan,Allan1987TimeAF} $\sigma_A(\tau)$ addresses part of this challenge by providing a convergent statistical measure that relaxes the requirement of an ideal time reference: in its two-way measurement form, it allows the characterisation of a timekeeping device by comparing it with a copy across successive measurement intervals over a time $\tau$. While this approach allows the precision of a clock to be measured without direct comparison with an ideal reference \cite{Allan1981ModifiedAllan,Howe2000TotalDev}, it still has significant limitations.

First, by definition, the Allan deviation is measured over a given time interval $t\in(0,\tau]$. In the absence of an ideal clock or reference, one can only approximate this parameter with a precision equal to that of the timekeeping device under test. Second, evaluating the Allan deviation requires frequency measurements. Common methods rely either on frequency counters (which in practice need a timing standard to count cycles per unit time) or on phase/frequency locking (whose precision is limited by the phase measurement itself \cite{Howe2000TotalDev, Dawkins2007FreqCounter, Kirchner1999TWSTFT}). Finally, in order to measure the Allan variance with two copies of a clock, both need to be started simultaneously. One may object that producing a trigger that starts two clocks nearly simultaneously could itself require a timing reference more precise than the devices under test. In short, it is not clear whether it is actually possible to measure the Allan deviation with arbitrary accuracy without depending on an external time reference frame, since the inclusion of any external measurement device may implicitly require external references or additional time or phase measurement devices.

An additional issue is that, one may argue that, from a theoretical point of view, manufacturing two exact copies of the same clock is impossible, thereby adding a new source of uncertainty that may not be easy to incorporate into theoretical models of clocks.

From a purely theoretical perspective, an absolute precision measurement defined using solely the statistical moments of the probability distribution of the clock would offer distinct advantages, such as directly characterising the inherent uncertainty both at short and long measurement periods, requiring no external reference, and, probably most importantly, resulting in a figure of merit for the precision of a clock that does not directly depend on any physical experiment or implementation. This would make it a more convenient theoretical tool than the Allan variance currently is \cite{Woods2022QuantumClocks,natureClocks}.

% Different approach: do we actually need an ideal reference frame to measure time? Can't we manage without one? Can we measure time and tolerance aka precision in an \textbf{operational} way without an ideal reference frame? (not lots of papers focusing on this)
Due to the fundamental nature of time, it seems a priori impossible to operationally evaluate any clock in such an absolute manner without reference to another ideal or more precise clock. Nevertheless, proving or disproving this limitation from theoretical foundations remains challenging. This led us to ask: might there be an alternative approach to clock characterisation and time measurement? Is it truly impossible to measure time and quantify timekeeping precision in an operational manner without requiring an ideal reference frame? These questions motivate the investigation of new methods that might circumvent the apparent necessity of an external temporal standard when evaluating the performance of timekeeping devices \cite{Nurgalieva2024Tomography}.

%Talk about R and E[t] generally, not in detail, to motivate the rest of the text. Also motivate R (all the theoretical papers on clocks bound R and talk about R.
An additional complexity arises from the multiple figures of merit used in the literature for characterising the precision of timekeeping devices. One such figure of merit is $R$, which, for a ticking clock whose (stochastic) ticking period is modelled by the random variable $T$, is theoretically defined as the ratio between its squared mean $\mu_T^2$ and its variance $\sigma_T^2$. This measure is invaluable for deriving theoretical results regarding temporal measurement capabilities and serves as the standard metric in quantum information theory and quantum foundations \cite{Woods2022QuantumClocks, Nurgalieva2024Tomography, Yang2019AccuracyEnhancing, Erker2017AutonomousClocks, moreira2025precisionboundsmultiplecurrents, prech2024optimaltimeestimationclock}.

However, the definition of $R$ raises concerns about the feasibility of performing operational measurements with arbitrary precision in the absence of an ideal time reference frame, since both $\mu_T$ and $\sigma_T$ seemingly require such a reference for their accurate determination. Take $\mu_T$, for example. Its definition suggests that we should record the times at which ticks are observed using a perfect time reference, then compute the mean value of these observed times.

To address this challenge, researchers \cite{doubletick} developed an alternative figure of merit, which we denote as $\mathbb{E}[\tau]$. This metric is defined experimentally by using two identical copies of the device under evaluation and conducting the experiment detailed in \Cref{sec:theoframe}. While this figure of merit can be obtained operationally without an external reference, establishing a clear relationship between $\mathbb{E}[\tau]$ and $R$ remains non-trivial, and $\mathbb{E}[\tau]$ lacks the formal utility needed for deriving rigorous theoretical results about timekeeping capabilities.

This paper therefore proposes new methods for clock characterisation that do not rely on external time reference frames. In addition, we establish a precise, bounded relationship between $R$ and $\mathbb{E}[\tau]$, demonstrating the operational significance of $R$, and we show how precise timekeeping measurements can be performed without requiring an ideal temporal reference. Moreover, we relax the requirement of two exact copies of the clock by requiring only that the two clocks have equal precision, allowing for potentially different oscillation frequencies due to manufacturing defects or relativistic effects \cite{naturePreciseClock,naturePreciseClock2}.

The framework developed here provides the conceptual starting point for our companion work~\cite{Dominguez2026Characterising}. A necessary first step is to establish that the theoretically defined precision $R$ is itself an operational quantity: despite being expressed in terms of temporal moments, here we show that it can be inferred without access to an ideal time reference, and can be connected to experimentally meaningful clock-comparison observables. Having established this reference-free operational meaning of $R$, one can then ask how far the required resources can be reduced. In \cite{Dominguez2026Characterising}, we show that the auxiliary clock can in fact be removed altogether: a single clock can characterise its own precision using only a signal splitter and an imperfect delayed trigger, whose uncertainty is explicitly incorporated into the protocol. Moreover, this construction extends the framework beyond i.i.d. reset clocks to weakly dependent ticking processes, with reset clocks recovered as a special case. Thus, the present work establishes the operational foundation on which the single-clock, more general characterisation of \cite{Dominguez2026Characterising} is built.

%%%%%%%%%%%%%%%%%%%%%%%%%%%%%%%%%%%%%%%%%%%%%%%%%%%%%%%%%%%%%%%

\section{Theoretical framework}
\label{sec:theoframe}

% Definition of a independent ticking clock
Following the definition proposed in \cite{Woods_2021}, we conceptualise clocks as physical systems that autonomously emit temporal information to the environment in the form of ``ticks'', as depicted in \Cref{fig:ticks}. Consequently, the absence of an ideal time reference implies that the time interval between these ``ticks'' cannot be deterministic but must be characterised as a random variable $T$. While the most general definition of a clock imposes no constraints on the properties of these random variables, this paper focuses specifically on {\it independent ticking clocks}: 

Suppose the first $k-1$ ticks occur at times $0\leq \tau_1\leq \ldots\leq \tau_{k-1}$ respectively. Then, denoting by $P_k(t)$ the probability that the $k$-th tick occurs at time $t$, an independent ticking clock is one for which $P_k(t)$ is of the form $P_k(t)=P(t-\tau_k)$, for some $k$-independent probability $P:\mathbb{R}_{\geq 0}\to[0,1]$.

Independent ticking clocks have the defining property that they are characterised by the fact that they return to their initial state after generating each tick, such as clocks working on a limit cycle. As a result, each tick occurs independently of previous ticks, and all ticks follow the same probability distribution, making the ticking times $T_i$ independent and identically distributed (i.i.d.) variables. Although i.i.d. events represent a significant constraint for most random phenomena, this limitation is less restrictive for timekeeping devices, since most fundamental physical processes that prove useful for timekeeping, including atomic and nuclear state decays, naturally exhibit independent and identically distributed characteristics (indeed, our current definition of the second is based precisely on such an atomic transition). Therefore, a clock model featuring independent and identically distributed ticks has sound scientific justification \cite{Nurgalieva2024Tomography}.

One interesting property of independent ticking clocks is that, in the limit of very precise clocks or very large timescales, it is possible to directly relate their precision $R$ (used in theory-based articles \cite{Meier2025, Woods2022QuantumClocks, Erker2017AutonomousClocks, prech2024optimaltimeestimationclock}) to their Allan variance (the standard in experimental works),
\begin{equation}
    \label{eq:allanR}
    \sigma_A^2(\tau) = \frac{1}{R\ \mu\ \tau}.
\end{equation}
A proof of this result for quantum-mechanical clocks can be found in \cite{Silva2023TickingClocks}, and \Cref{sec:app_allan} gives a more general proof showing that it also holds for classical clocks.

\begin{figure}[htbp]
    \centering
    \includegraphics[width=0.5\textwidth]{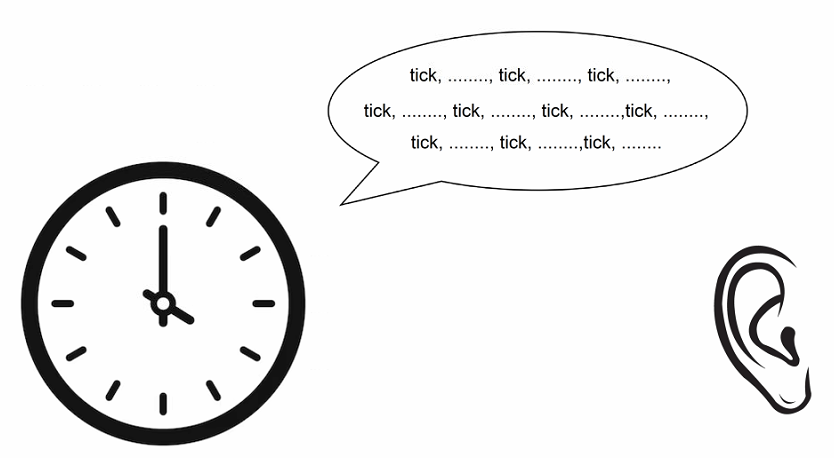}
    \caption{Normal behaviour of a clock: a physical system capable of transmitting information to the environment about the passage of time in the form of ``ticks'' in an autonomous manner.}
    \label{fig:ticks}
\end{figure}

% Introduction of the ticks as iids and picture with two clocks and lines
In this manuscript, we will work with experiments involving $N$ clocks with the same precision $R$. The behaviour of each copy is modelled as a discrete stochastic process $S_n = \sum_{i=1}^n T_i$, where $S_n$ represents the cumulative elapsed time required for a clock to produce $n$ ticks. By treating the individual tick intervals $T_i$ as i.i.d. random variables, we can use the formal framework of renewal and martingale theory. Specifically, this allows the application of Lorden's Inequality for renewal processes, Wald's First and Second Identities, and Doob's Optional Stopping Theorem to characterise the system's temporal evolution.

Regarding the figures of merit for clock precision, $R=\mu^2/\sigma^2$ is invariant across all copies of a particular clock, as they share identical $\mu$ and $\sigma$ values. Moreover, $R$ exhibits time-scale invariance: multiplying both $\mu$ and $\sigma$ by the same constant leaves $R$ unchanged. This property proves valuable in \Cref{sec:resultC}, as variations in a clock's oscillation frequency (whether due to manufacturing inconsistencies or external factors) do not affect $R$.

The alternative figure of merit in the literature~\cite{doubletick}, $\mathbb{E}[\tau]$, is defined through the following experimental procedure. Consider two identical copies of a clock, labelled A and B. We initiate copy A at time $t = 0$ and copy B at $t = \mu/2$. Initially, since both copies share the same mean ticking period, they should alternate their ticks. However, due to the inherent imperfection of real clocks, one can prove that, with probability $1$, eventually one copy will produce two consecutive ticks without an intervening tick from the other one. We define this event as a ``double tick'', and denote by $\tau$ the total number of ticks (from both copies combined) before this double tick occurs. A graphical depiction of this method is shown in \Cref{fig:defEt2}. Intuitively, more precise clocks yield higher values of $\mathbb{E}[\tau]$, and vice versa. This process can be generalised to any initial time offset $\Delta = \delta \mu$, which will become relevant for our analysis in \Cref{sec:results}.

\begin{figure*}[htbp]
    \centering
    \includegraphics[width=0.85\textwidth]{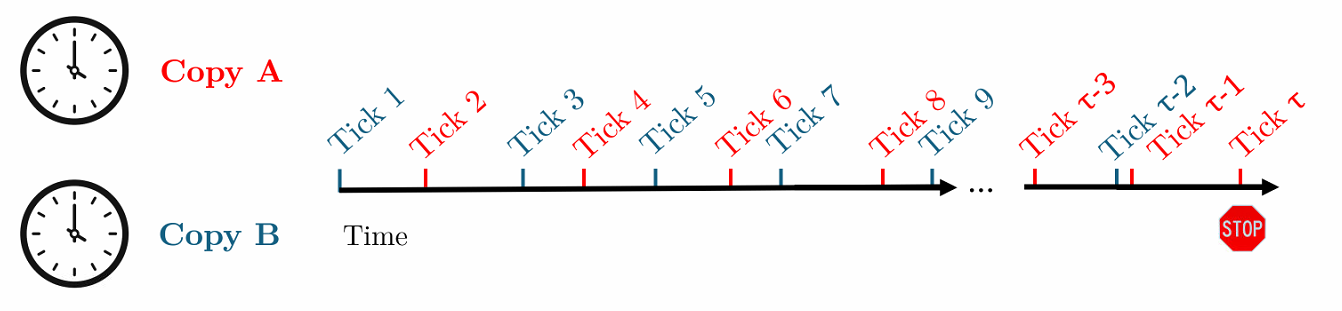}
    \caption{Experiment that defines the figure of merit $\mathbb{E}[\tau]$ as in \cite{doubletick}. The two copies of the clock are started with an initial delay $\Delta=\delta\mu$, and they tick freely in an alternating manner. Due to the stochastic nature of the ticking period, one will almost surely find a situation in which one of the two copies produces two consecutive ticks without the other copy performing any ticks: a \textit{double tick}. At that moment, we record the number of ticks that took place until then, denoted by $\tau$, and we stop and reset the experiment. Note that one does not need access to any external measuring device or reference frame: simply counting the number of ticks is enough to characterise the precision of the clock.}
    \label{fig:defEt2}
\end{figure*}

In the following sections, we address the physicality of $R$ (whether it is possible to measure a clock's $R$ value operationally without an external time reference) as well as the potential relationship between $R$ and $\mathbb{E}[\tau]$. Establishing a connection between these metrics would demonstrate the physicality of $R$, though finding such a relationship presents significant challenges.

To address these questions, \Cref{sec:results} presents various experiments that can be performed operationally to measure these figures of merit. Each experiment requires $N$ identical clock copies with an initial temporal offset of $\Delta=\delta\mu$ between them. In addition, we model scenarios involving multiple clocks with identical precision but different ticking frequencies, as previously discussed.

%%%%%%%%%%%%%%%%%%%%%%%%%%%%%%%%%%%%%%%%%%%%%%%%%%%%%%%%%%%%%

\section{Results}
\label{sec:results}
% TODO: Citas cruzadas entre las 2 publicaciones

The following subsections present different experiments that allow the characterisation of a timekeeping device by using a number $N$ of clocks with the same unknown precision $R$. For some of the characterisation protocols proposed, the clocks may start at different times, given an initial relative delay $\Delta$. Moreover, we also consider the possibility that they tick at different frequencies (and with correspondingly different means $\mu_1$ and $\mu_2$) for the same value of $R$, giving rise to a ratio $\alpha := \mu_{2}/\mu_{1}$ different from one.

\subsection{Many-equally-fast-clocks limit without initial offset ($\alpha = 1$, $N\to \infty$, $\Delta = 0$)}
\label{sec:resultA}

In this first case, we start with an arbitrarily large number of copies $N$ with the same ticking period ($\alpha = 1$). A single independent ticking clock that generates ticks over a given time interval can be regarded as a renewal process, which allows us to apply an argument based on the Central Limit Theorem for renewal processes and obtain a measure of the clock's precision.

To perform this experiment, each of the $N\in\mathbb{N}_{>0}$ copies is started at some arbitrary initial time. Importantly, there is no need to synchronise their starting times, as this would require an ideal time reference. At some random time after all clocks are running, we start counting the ticks each clock produces, and we stop counting at a later random time. We denote the time interval over which ticks are counted by the outcome of a random variable $\Theta_0 > 0$. Each experimental run has average duration $\mathbb{E}[\Theta_0]$. Let us denote by $\beta(x) \in\mathbb{N}_{>0}$ the random variable that counts the number of ticks a single copy would produce in time $x>0$. Since we are considering i.i.d. clocks, $\beta(\Theta_0)$ is identically distributed across copies. Intuitively, the more imprecise a copy is, the larger the standard deviation of $\beta(\Theta_0)$ is. Similarly, one expects larger $\mathbb{E}[\Theta_0]$ to increase the variance in the number of ticks, so there should be a direct relationship between $\mathbb{E}[\Theta_0]$, $R$ and the statistical moments of $\beta(\Theta_0)$, which we denote by

\begin{align}
\label{eq:def_case_A}
    \mu_{\beta(\Theta_0)} &:= \mathbb{E}[\beta(\Theta_0)],\\
    \nonumber \sigma_{\beta(\Theta_0)}^2 &:= \mathbb{E}[\beta(\Theta_0)^2] - \mathbb{E}[\beta(\Theta_0)]^2.
\end{align}

In the following, we assume that the variance of $\Theta_0$ increases at most sub-linearly with its mean, $\mathbb{E}[\Theta_0]$. This is a reasonable assumption, as discussed in \Cref{sec:app_subA}.

% definir de forma precisa mu y sigma, sobre qué se promedia

\begin{restatable}{theorem}{theoremone}
\label{th:1}

Let $\mu_{\beta(\Theta_0)}$ be the mean number of ticks recorded in the experiment and let $\sigma^2_{\beta(\Theta_0)}$ be the observed variance of this measurement as defined in \cref{eq:def_case_A}. Then, 

\begin{equation}
\label{eq:case_A}
	R = \frac{\mu_{\beta(\Theta_0)}}{\sigma^2_{\beta(\Theta_0)}} + O_\mathbb{P}\!\left(\frac{1}{\sqrt{N}}\right) + O\!\left(\frac{1}{\mathbb{E}[\Theta_0]}\right),
\end{equation}

as $N \to \infty,\ \mathbb{E}[\Theta_0] \to \infty$.

\end{restatable}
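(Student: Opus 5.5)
The plan is to treat each copy as a renewal process observed in an independent random window, and to split the error into a statistical part and a renewal part. The statistical part, coming from estimating the two moments from $N$ independent copies, gives the $O_\mathbb{P}(1/\sqrt{N})$ term. The renewal part, coming from the gap between exact finite-time moments of the counting process and their asymptotic renewal-theory values, gives the $O(1/\mathbb{E}[\Theta_0])$ term.

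\textbf{Step 1 (fixed window).} First condition on a deterministic window length $\Theta_0=t$. Because the copies are started at arbitrary unsynchronised times and counting begins at a random time after all are running, I will model each copy in the window as a stationary (equilibrium) renewal process. This is justified by the key renewal theorem, with the residual-life distribution controlled by Lorden's inequality. For the stationary process, $\mathbb{E}[\beta(t)]=t/\mu$ exactly. The renewal-theoretic variance expansion gives
\[
\mathrm{Var}[\beta(t)]=\frac{\sigma^2}{\mu^3}\,t+c+o(1),
\]
where $c$ depends only on the first three moments of $T$ (finite by the clock axioms). This step uses Wald's identities and optional stopping applied to $S_{\beta(t)+1}$, together with a Lorden-type bound on the overshoot.

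\textbf{Step 2 (random window).} Next, average over an independent $\Theta_0$ using the law of total variance:
\[
\sigma^2_{\beta(\Theta_0)}=\mathbb{E}\bigl[\mathrm{Var}(\beta\mid\Theta_0)\bigr]+\mathrm{Var}\bigl(\mathbb{E}[\beta\mid\Theta_0]\bigr)
=\frac{\sigma^2}{\mu^3}\mathbb{E}[\Theta_0]+c+\frac{\mathrm{Var}(\Theta_0)}{\mu^2}+o(1),
\]
while $\mu_{\beta(\Theta_0)}=\mathbb{E}[\Theta_0]/\mu$. Taking the ratio,
\[
\frac{\mu_{\beta(\Theta_0)}}{\sigma^2_{\beta(\Theta_0)}}=\frac{\mu^2}{\sigma^2}\Bigl(1+\frac{c\mu^3/\sigma^2+\mu\,\mathrm{Var}(\Theta_0)/\sigma^2}{\mathbb{E}[\Theta_0]}\Bigr)^{-1}.
\]
The $c$ contribution is $O(1/\mathbb{E}[\Theta_0])$. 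The window-variance contribution is $\mathrm{Var}(\Theta_0)/\mathbb{E}[\Theta_0]$, which vanishes by the sublinearity assumption stated before the theorem. To get exactly the claimed $O(1/\mathbb{E}[\Theta_0])$ rate here, I would read the assumption as requiring $\mathrm{Var}(\Theta_0)$ to be bounded (or $O(1)$). Failing that, I would have to accept a slightly weaker rate.

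\textbf{Step 3 (finite $N$).} Finally, replace the true moments by their sample versions over the $N$ i.i.d. copies. The sample mean and sample variance are $\sqrt{N}$-consistent by the CLT, which needs finite fourth moments of $\beta$; these follow from the finite moments of $T$ via standard renewal moment bounds. The ratio map $(m,v)\mapsto m/v$ is smooth near the true values, and $v$ is bounded away from $0$, so the delta method transfers the $O_\mathbb{P}(1/\sqrt{N})$ error to the estimator of $R$.

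\textbf{Main obstacle.} I expect the hardest part to be Step 1, specifically the uniform control of the constant term $c$. This means showing that the transient from the arbitrary start, via the residual-life distribution, contributes only $O(1)$ to the variance and does not grow with $t$. My first attempt would bound the first-renewal residual using Lorden's inequality, $\mathbb{E}[\text{overshoot}]\le \mathbb{E}[T^2]/\mu$, and apply the same bound to the second moment.
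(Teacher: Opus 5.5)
Your proposal is correct and follows essentially the same route as the paper. Both arguments use fixed-window renewal asymptotics for the mean and variance of the tick count, then the law of total variance with $\operatorname{Var}(\Theta_0)=O(1)$ to pass to a random window, then the delta method for the ratio under a fourth-moment assumption to produce the $O_\mathbb{P}(1/\sqrt{N})$ term. Your refinements only sharpen the argument: you model the unsynchronised window as a delayed/stationary renewal process with the delay controlled by Lorden-type bounds, and you require a finite third moment of $T$ for the $O(1)$ variance constant. The paper instead uses the ordinary renewal process started at time $0$ and asserts an $O(1)$ variance remainder under a finite second moment alone, which in general only yields $o(\vartheta_0)$ and hence not the stated $O(1/\mathbb{E}[\Theta_0])$ rate.
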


Here $O$ denotes big-$O$ notation and $O_\mathbb{P}$ indicates boundedness in probability.

The proof of this result is given in \Cref{sec:app_subA}. It is important to note that while both $\mu_{\beta(\Theta_0)}$ and $\sigma^2_{\beta(\Theta_0)}$ directly depend on $\Theta_0$, in the limit of large $N$ and $\mathbb{E}[\Theta_0]$, $\mu_{\beta(\Theta_0)}/\sigma^2_{\beta(\Theta_0)}$ does not.

Since \cref{eq:case_A} is derived from the application of the Central Limit Theorem, equality holds only in the asymptotic case $\mathbb{E}[\Theta_0]\to\infty$. Furthermore, since we estimate the statistical moments of $\beta(\Theta_0)$ by direct sampling over $N$ copies via the Strong Law of Large Numbers, the resulting sampling mean and variance estimates will be exact only in the limit $N\to\infty$.

\subsection{Two equal clocks with an arbitrary initial offset ($\alpha = 1$, $N = 2$, $\Delta \in [0,\mu)$)}
\label{sec:resultB}

In practice, manufacturing a large number $N$ of copies of a clock can be challenging. As an alternative, we study a second case where we start with only two copies of the clock ticking at the same period ($\alpha = 1$). The copies are started at $t=0$ and $t=\Delta$, respectively, with $0<\Delta<\mu$, and the procedure described in \Cref{sec:theoframe} to measure $\mathbb{E}[\tau]$ is performed. 

For the double-tick results below, we write
\[
    T_i' := T_i-\mu
\]
for the centred tick-time fluctuation of either copy. We assume throughout the paper that, for some fixed $p>2$, the centred ticking distribution has uniformly bounded $p$-th absolute moment in the precision limit:
\begin{equation}
    \label{eq:moment_assumption}
    \mathbb{E}\!\left[\left|\frac{T_i'}{\sigma}\right|^p\right]\le C_p ,
\end{equation}
where $C_p$ is independent of $R$, $\mu$ and $\delta$. Broadly speaking, \cref{eq:moment_assumption} ensures that as our clock gets more precise ($R$ increases), the tails of the distribution will not grow indefinitely (the probability of producing a tick far from the mean gets smaller the more precise a clock is). An example of this would be clocks whose ticking distributions have exponentially decaying tails, such as Gaussian or Poissonian distributions. The constants appearing in the following bounds may depend on $p$ and $C_p$, but not on the clock precision.
We also write
\[
    P_{even}:=P[\tau \text{ even}],\qquad P_{odd}:=P[\tau \text{ odd}]
\]
for the probabilities that the double tick takes place at an even or odd number of ticks, respectively.

\begin{restatable}{corollary}{theoremtwo}
\label{th:2}
For every clock, one can always find a $\kappa$ such that

\begin{equation}
\label{eq:case_B_eq1}
    \left| \frac{\mathbb{E}[\tau]}{R} - \delta(1-\delta) \right| \le \frac{\kappa}{R^{1/2-1/p}}
\end{equation}

for any sufficiently large $R$, where $\delta=\Delta/\mu$ is the normalised initial time offset, satisfying $0<\delta<1$.
%\mpwb{I will wait until the proof is correct to rewrite this}
\end{restatable}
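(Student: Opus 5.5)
The plan is to reduce the double-tick experiment to the exit of a centred random walk from an interval, and then apply Wald's identities (Doob's optional stopping) with explicit control of the overshoot.

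\textbf{Reduction to a random walk.} Let copy A tick at $S^A_n=\sum_{i\le n}T^A_i$ and copy B at $\Delta+S^B_n$. Define
\[
D_n:=\Delta+S^B_n-S^A_n=\Delta+\sum_{i\le n}\xi_i,\qquad \xi_i:=T'^B_i-T'^A_i .
\]
The $\xi_i$ are i.i.d., centred, with variance $2\sigma^2$. By \cref{eq:moment_assumption} and Minkowski, $\mathbb{E}|\xi_i/\sigma|^p\le 2^pC_p$. Alternation persists exactly while $0<D_n<T^A_{n+1}$ for every $n$. Let $\mathcal N$ be the first $n$ at which this fails. A failure $D_n\le 0$ means B lags and produces no tick between two ticks of A; a failure $D_n\ge T^A_{n+1}$ means A ticks twice first. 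In either case $\tau\in\{2\mathcal N,2\mathcal N+1,2\mathcal N+2\}$. Hence $|\mathbb{E}[\tau]-2\mathbb{E}[\mathcal N]|\le 2$, which is negligible against $R^{1/2+1/p}$. The problem is therefore to show
\[
\mathbb{E}[\mathcal N]=\tfrac12\delta(1-\delta)R+O\!\left(R^{1/2+1/p}\right).
\]

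\textbf{Optional stopping.} Both $D_n$ and $D_n^2-2\sigma^2 n$ are martingales. I would first apply optional stopping to $\mathcal N\wedge m$ and let $m\to\infty$. This requires an a priori bound $\mathbb{E}[\mathcal N]\lesssim R$, which I would obtain crudely: the walk must leave an interval of width $O(\mu)$ plus a fluctuation, and each block of $\sim R$ steps has probability bounded below of doing so. The two identities then give
\[
\mathbb{E}[D_{\mathcal N}]=\Delta,\qquad \mathbb{E}[D_{\mathcal N}^2]=\Delta^2+2\sigma^2\,\mathbb{E}[\mathcal N].
\]
Consider first the idealised situation in which $D_{\mathcal N}$ lands exactly on $0$ or $\mu$. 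Then $P[D_{\mathcal N}=\mu]=\delta$ and $\mathbb{E}[D^2_{\mathcal N}]=\delta\mu^2$, so
\[
\mathbb{E}[\mathcal N]=\frac{\mu^2\delta(1-\delta)}{2\sigma^2}=\tfrac12\delta(1-\delta)R,
\]
which is the claimed leading term.

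\textbf{Error control.} The actual endpoint deviates from $\{0,\mu\}$ for two reasons: the overshoot $|\xi_{\mathcal N}|$, and the random upper boundary $T^A_{\mathcal N+1}=\mu+T'^A_{\mathcal N+1}$. Both are bounded by
\[
M:=\max_{k\le\mathcal N+1}\bigl(|\xi_k|+|T'^A_k|\bigr).
\]
Wald's identity together with Jensen's inequality gives
\[
\mathbb{E}[M]\le\Bigl(\mathbb{E}\sum_{k\le\mathcal N+1}(\cdots)^p\Bigr)^{1/p}\lesssim\sigma\,\bigl(\mathbb{E}[\mathcal N]+1\bigr)^{1/p}\lesssim\sigma R^{1/p}.
\]
Substituting $D_{\mathcal N}=b+e$ with $b\in\{0,\mu\}$ and $|e|\le M$ into both identities shows that the exit probability is $\delta+O(\mathbb{E}M/\mu)$ and that $\mathbb{E}[D^2_{\mathcal N}]=\delta\mu^2+O(\mu\,\mathbb{E}M+\mathbb{E}M^2)$. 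For the term $\mathbb{E}M^2$ I would use the same argument with exponent $p/2$, or truncate. Dividing by $2\sigma^2$ yields an error in $\mathbb{E}[\mathcal N]$ of order $\mu\sigma R^{1/p}/\sigma^2=R^{1/2+1/p}$. Dividing by $R$ then gives \cref{eq:case_B_eq1}, with $\kappa$ depending only on $p$ and $C_p$, for $R$ large enough that the $\mathbb{E}M^2/\sigma^2\sim R^{2/p}$ term is dominated, which holds since $p>2$.

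The main obstacle is this overshoot and random-boundary control using only $p$ moments. The delicate points are securing the a priori bound $\mathbb{E}[\mathcal N]=O(R)$ that the maximal estimate feeds on, and handling the correlation between $T^A_{\mathcal N+1}$ and the stopping rule. I would handle the latter by noting that the event $\{\mathcal N\ge n\}$ depends on $T^A_{n+1}$ only through a one-step look-ahead, and absorbing this dependence into $M$.
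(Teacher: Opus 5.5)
Your plan is correct and uses the same engine as the paper, which obtains \Cref{th:2} by combining \Cref{th:3,th:4}. The shared ingredients are Wald's two identities for a centred walk leaving an interval of length $\mu$, an overshoot bounded by the largest increment, that increment controlled via $\max|\cdot|\le(\sum|\cdot|^p)^{1/p}$ together with Jensen and Wald, and an a priori bound $\mathbb{E}[\tau]=O(R)$. The differences are in the set-up.

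The paper uses the per-tick alternating walk $S_n=\sum_{i\le n}(-1)^{i+1}T_i'$ with fixed boundaries $\Delta-\mu$ and $\Delta$. There $\tau$ is a genuine stopping time, Wald holds exactly, and the overshoot is the single increment $|T'_\tau|$. Your paired walk has the merit of testing each boundary only at the parity where crossing it really is a double tick; the rule \cref{eq:app_B1} tests both boundaries at every tick. The price is the look-ahead boundary $T^A_{n+1}$: $\mathcal N$ is not a stopping time for the filtration in which $D_n$ and $D_n^2-2\sigma^2 n$ are martingales. One checks that
\begin{align*}
\mathbb{E}[D_{\mathcal N}]-\Delta&=\mathbb{E}[T'^A_{\mathcal N+1}],\\
\mathbb{E}[D_{\mathcal N}^2]-\Delta^2-2\sigma^2\mathbb{E}[\mathcal N]&=\sigma^2-\mathbb{E}[(T'^A_{\mathcal N+1})^2]\\
&\quad+2\,\mathbb{E}[D_{\mathcal N}T'^A_{\mathcal N+1}].
\end{align*}
So the identities you quote are off by $O(\mu\,\mathbb{E}[M]+\mathbb{E}[M^2])$, using $\sigma^2\le\mathbb{E}[M^2]$ and $|D_{\mathcal N}|\le\mu+M$. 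Your ``absorb into $M$'' step therefore does go through. Alternatively, interleaving the half-step $D_n-T'^A_{n+1}=\Delta-S_{2n+1}$ turns the upper test into the fixed boundary $\mu$ and removes these corrections altogether.

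The second difference is the a priori bound. You propose a block/anti-concentration argument for $\mathbb{E}[\mathcal N]=O(R)$, and its uniformity in $R$ would still have to be derived from \cref{eq:moment_assumption}. The paper needs only qualitative finiteness (\Cref{lem:tau_moments}). It then feeds the overshoot bound back into Wald's second identity to get $x\le 1+2C_p^{1/p}R^{-1/2+1/p}x^{1/p}+C_p^{2/p}R^{-1+2/p}x^{2/p}$ for $x=\mathbb{E}[\tau]/R$. Since $p>2$, this forces $x\le K_p$, a simpler self-consistent step than a uniform CLT-type estimate.

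Finally, the paper splits your last step into $|\delta-P_{odd}|=O(R^{-(1/2-1/p)})$ and $\mathbb{E}[\tau]/R=P_{odd}P_{even}+O(R^{-(1/2-1/p)})$. This is the same computation as your direct substitution of the exit probability, organised so that it also yields the $\delta$-free estimator of \Cref{th:4}.
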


The proof of this result, which makes use of martingale theory and Doob's Optional Stopping Theorem, is provided in \Cref{sec:app_subB}.

\Cref{eq:case_B_eq1} depends on the normalised initial offset $\delta=\Delta/\mu$. At first glance this suggests that an accurate estimate of $\delta$ is required to determine $R$, which would reintroduce dependence on an external timing reference. \Cref{th:3} shows this is not the case: $\delta$ can be estimated from the same double-tick data with an error bound that vanishes as $R \to\infty$. 

\begin{restatable}{theorem}{theoremthree}
\label{th:3}
The normalised initial time offset $0<\delta<1$ satisfies

\begin{equation}
    \label{eq:case_B_eq3}
    \left|\delta - P_{odd}\right| \le \frac{\kappa_{p}}{R^{1/2-1/p}}
\end{equation}

for some $\kappa_p > 0$ that is independent of $R$ and $\delta$.

%\mpwb{I will wait until the proof is correct to rewrite this}
\end{restatable}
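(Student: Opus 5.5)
The plan is to reduce the double-tick experiment to a two-sided exit problem for a mean-zero random walk and apply Doob's Optional Stopping Theorem, as in the proof of \Cref{th:2}. Let copy A tick at $S^A_n=\sum_{i\le n}T^A_i$ and copy B at $\Delta+S^B_n$. Define the walk
\[
X_n:=\Delta+\sum_{i\le n}\left(T^B_i-T^A_i\right),\qquad X_0=\Delta .
\]
Its increments $\xi_i:=T^B_i-T^A_i=T'^B_i-T'^A_i$ are i.i.d., centred, have variance $2\sigma^2$, and inherit the bound \cref{eq:moment_assumption} up to a constant factor. Alternation up to the $n$-th pair of ticks is the condition $A_n<B_n<A_{n+1}$, that is, $0<X_n<T^A_{n+1}$. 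The double tick therefore happens at the first index $\nu$ at which $X_\nu$ leaves the window $(0,\,\mu+T'^A_{\nu+1})$.

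\textbf{Step 1 (parity identification).} First I will check carefully from the definition of $\tau$ in \Cref{sec:theoframe} that the parity of $\tau$ is determined by which side of the window is crossed. Exit through the upper, randomly moving boundary means A ticks twice. Exit through $0$ means B ticks twice. The goal is to show that $\{\tau\text{ odd}\}$ coincides with the upper exit $\{X_\nu\ge \mu+T'^A_{\nu+1}\}$; if the counting convention gives the opposite pairing, the same argument applies with the roles exchanged. This is pure bookkeeping, but it fixes which probability is close to $\delta$.

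\textbf{Step 2 (optional stopping).} Since $X_n$ is a martingale and $\mathbb{E}[\nu]<\infty$ (indeed $\mathbb{E}[\nu]=O(R)$ by \Cref{th:2} or by Wald's second identity), and since the increments are integrable, optional stopping gives $\mathbb{E}[X_\nu]=\Delta$. Decompose $X_\nu$ on the two exit events: on the upper exit, $X_\nu=\mu+E_+$; on the lower exit, $X_\nu=-E_-$. The overshoots satisfy $|E_\pm|\le |\xi_\nu|+|T'^A_{\nu+1}|$. This yields
\[
\Delta=\mu P_{odd}+\mathbb{E}\!\left[E_+\mathbf 1_{\text{up}}\right]-\mathbb{E}\!\left[E_-\mathbf 1_{\text{down}}\right],
\qquad\text{hence}\qquad
|\delta-P_{odd}|\le \frac{\mathbb{E}\!\left[|\xi_\nu|+|T'^A_{\nu+1}|\right]}{\mu}.
\]

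\textbf{Step 3 (overshoot bound; the main obstacle).} The difficulty is that $\xi_\nu$ is evaluated at a stopping time, so it is size-biased and cannot be bounded simply by $O(\sigma)$. I will use the crude estimate $|\xi_\nu|\le\max_{i\le\nu}|\xi_i|$ together with
\[
\mathbb{E}\Big[\max_{i\le \nu}|\xi_i|\Big]\le \Big(\mathbb{E}\sum_{i\le\nu}|\xi_i|^p\Big)^{1/p}=\left(\mathbb{E}[\nu]\,\mathbb{E}|\xi_1|^p\right)^{1/p},
\]
where the equality is Wald's identity. With $\mathbb{E}[\nu]\le cR$ and $\mathbb{E}|\xi_1|^p\le C\sigma^p$, this gives $O(\sigma R^{1/p})$. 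The term $T'^A_{\nu+1}$ is independent of $\mathcal F_\nu$, but it also enters the exit condition; I will handle it the same way by bounding it with $\max_{i\le\nu+1}|T'^A_i|$. Dividing by $\mu$ and using $\sigma/\mu=R^{-1/2}$ gives $|\delta-P_{odd}|\le \kappa_p R^{-1/2+1/p}$. Here $\kappa_p$ depends only on $p$, $C_p$ and the constant in $\mathbb{E}[\nu]\le cR$, which is uniform in $\delta\in(0,1)$ since $\mathbb{E}[\nu]\lesssim \delta(1-\delta)R+O(R^{1/2+1/p})$.

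If the uniform bound on $\mathbb{E}[\nu]$ via \Cref{th:2} turns out to require large $R$, I will instead bound $\mathbb{E}[\nu]$ directly. Applying Wald's second identity to $X_\nu$ gives $2\sigma^2\mathbb{E}[\nu]=\mathbb{E}[(X_\nu-\Delta)^2]\le 2\mu^2+2\,\mathbb{E}[(\text{overshoot})^2]$. This can be closed self-consistently, or by truncating the increments at level $\sigma R^{1/p}$.
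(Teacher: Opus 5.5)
Your overall plan (optional stopping, overshoots controlled by the last increments, $\mathbb{E}[\max_{i\le\nu}|\cdot|]\le(\mathbb{E}\sum_{i\le\nu}|\cdot|^p)^{1/p}$, and $\mathbb{E}[\nu]=O(R)$) is the paper's. However, Step 2 fails as written, and the identity it asserts is false. The upper edge of your window, $\mu+T'^A_{n+1}$, involves the A-part of the \emph{next} increment $\xi_{n+1}=T'^B_{n+1}-T'^A_{n+1}$. So $\nu$ is a stopping time only for $\mathcal{G}_n:=\sigma(T^A_1,\dots,T^A_{n+1},T^B_1,\dots,T^B_n)$. With respect to $\mathcal{G}_n$, $\mathbb{E}[X_{n+1}-X_n\mid\mathcal{G}_n]=-T'^A_{n+1}\neq0$, so $X_n$ is not a martingale there. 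Expanding $X_\nu-\Delta=\sum_{i\ge1}\xi_i\mathbf{1}\{\nu\ge i\}$ gives $\mathbb{E}[X_\nu]=\Delta+\mathbb{E}[T'^A_{\nu+1}]$, with $\mathbb{E}[T'^A_{\nu+1}]=-\sum_i\mathbb{E}[T'^A_i\mathbf{1}\{\nu\ge i\}]$. This is strictly negative, e.g.\ for Gaussian ticks, because $\{\nu\ge i\}$ requires $T'^A_i>X_{i-1}-\mu$: an early A tick is exactly what triggers the upper exit. The same defect breaks the Wald equality in Step 3, since $\{\nu\ge i\}$ depends on $T^A_i$, which is part of $\xi_i$. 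It also breaks Wald's second identity in your fallback.

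The missing idea is to refine the time step to one tick. The compensated process $\tilde X_n:=X_n-T'^A_{n+1}$ has increments $T'^B_{n+1}-T'^A_{n+2}$ independent of $\mathcal{G}_n$, so it is a $\mathcal{G}$-martingale with $\mathbb{E}[\tilde X_0]=\Delta$. In fact $X_n=\Delta-S_{2n}$ and $\tilde X_n=\Delta-S_{2n+1}$ for the paper's walk $S_n=\sum_{i\le n}(-1)^{i+1}T'_i$ on the interleaved sequence $T^A_1,T^B_1,T^A_2,\dots$. That walk has fixed exit levels $\Delta$ and $\Delta-\mu$, so $\tau$ is a genuine stopping time and $\mathbb{E}[S_\tau]=0$ holds; this is how the paper avoids the problem. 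Optional stopping for $\tilde X$ yields $\mathbb{E}[X_\nu]=\Delta+\mathbb{E}[T'^A_{\nu+1}]$, and the extra term is of the same order as your overshoots. The $p$-th moment bound then goes through once you split $\xi_i$ into its A and B parts. This works because $\{\nu\ge i\}$ is independent of $T^B_i$ and $\{\nu+1\ge i\}$ is independent of $T^A_i$.

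Three smaller points. First, bounding $\mathbb{E}[\nu]$ through \Cref{th:2} is circular, because the paper derives \Cref{th:2} from \Cref{th:3,th:4}. Your fallback is what the paper actually does: in the proof of \Cref{th:4}, $x:=\mathbb{E}[\tau]/R$ satisfies $x\le1+2C_p^{1/p}R^{-1/2+1/p}x^{1/p}+C_p^{2/p}R^{-1+2/p}x^{2/p}$, which forces $x\le K_p$ uniformly in $\delta$. It must, however, be run on a true martingale such as $\tilde X_n^2-2n\sigma^2$ or $S_n^2-n\sigma^2$. It also needs $\mathbb{E}[\nu]<\infty$ a priori, e.g.\ via the geometric-tail argument of \Cref{lem:tau_moments}.

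Second, the upper edge of the window moves, so on the upper exit (for $\nu\ge1$) you only have $T'^A_{\nu+1}\le X_\nu-\mu<T'^B_\nu$. Your bound $|E_+|\le|\xi_\nu|+|T'^A_{\nu+1}|$ should therefore read $|E_+|\le|T'^A_{\nu+1}|+|T'^B_\nu|$; this does not affect the rate.

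Third, the parity identification is not interchangeable bookkeeping: the opposite pairing would yield $\delta\approx P_{even}$ and contradict the statement. In the paper's model, odd $\tau$ means A overtakes B ($S$ exits through $\Delta-\mu$), which agrees with your guess.
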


The proof (detailed in \Cref{sec:app_subB}) again uses martingale theory and is based on constructing suitable martingales to which one can apply the Optional Stopping Theorem.

As discussed in \Cref{sec:discussion}, this result shows not only that our two figures of merit for the precision of a clock are proportional in the limit of highly precise clocks, thereby providing a method for operationally determining $R$, but also that we can measure time intervals $\Delta$ smaller than the ticking period $\mu$, namely $\Delta < \mu$. Nevertheless, if one intended only to measure $R$, and since the derivative with respect to $\delta$ of $\mathbb{E}[\tau]/R$ has a minimum at $\delta = 1/2$, this is the initial offset one should aim for in order to minimise the uncertainty in the measurement of the precision of the clock.

Having a fixed constant $\delta$ may be unrealistic from an experimental point of view (due to, for example, interactions of the setup with the environment). In this case, it is worth noting that one can also decide not to estimate or keep track of the initial time shift between the copies and treat $\delta$ as a random variable instead: this will still yield a valid approach to measure $R$ (for example, a uniform distribution in $\delta$ would result in $\mathbb{E}[\tau] / R = 1/6 + O\left(R^{-(1/2-1/p)}\right)$).

With this in mind,~\Cref{th:2} can be further refined into a bound which is unconditional on $\delta$:

\begin{restatable}{theorem}{theoremfour}
\label{th:4}

The following unconditional bound for $R$ holds:
%\iffalse
\begin{equation}
\label{eq:case_B_eq2}
    0 \le \frac{\mathbb{E}[\tau]}{R} - P_{odd}P_{even} \le \frac{\kappa_{1,p}}{R^{1/2-1/p}}+\frac{\kappa_{2,p}}{R^{1-2/p}}.
\end{equation}
%\fi
%\mpwb{I will wait until the proof is correct to rewrite this}
\end{restatable}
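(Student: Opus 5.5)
The plan is to combine Corollary~\ref{th:2} and Theorem~\ref{th:3}, and then sharpen the one-sided statement with an exact martingale identity.

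\emph{Upper bound.} Write $P_{odd}=\delta+\epsilon$ with $|\epsilon|\le \kappa_p R^{-(1/2-1/p)}$ (Theorem~\ref{th:3}). Then $P_{even}=1-\delta-\epsilon$, so
\[
P_{odd}P_{even}=\delta(1-\delta)+\epsilon(1-2\delta)-\epsilon^2 .
\]
Inserting this into Corollary~\ref{th:2} gives
\[
\left|\frac{\mathbb{E}[\tau]}{R}-P_{odd}P_{even}\right|\le \frac{\kappa+\kappa_p}{R^{1/2-1/p}}+\frac{\kappa_p^2}{R^{1-2/p}} .
\]
This already yields the upper inequality with $\kappa_{1,p}=\kappa+\kappa_p$ and $\kappa_{2,p}=\kappa_p^2$. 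The $R^{-(1-2/p)}$ term is exactly the $\epsilon^2$ contribution, which explains the form of the stated bound.

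\emph{Lower bound.} The two-sided estimate does not give the nonnegativity $\mathbb{E}[\tau]/R\ge P_{odd}P_{even}$, so this needs an exact argument.

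1. Encode the alternation as a random walk. Let $D_k$ be the time gap between the current leading tick and the next tick of the other clock. Before a double tick, $D_k\in(0,\mu)$ up to an additive fluctuation. Each new tick updates the gap by $D_{k+1}=\mu-D_k+T'_{k}$, which is a reflection plus a centred noise increment.

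2. After the sign-alternating change of variables $X_k=(-1)^k(D_k-\mu/2)$, the process $X_k$ is a martingale with increments of variance $\sigma^2$. The double tick is the first exit of $D_k$ from $[0,\mu]$.

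3. Apply Doob's optional stopping theorem to $X_k^2-k\sigma^2$ (Wald's second identity). This gives
\[
\sigma^2\,\mathbb{E}[\tau]=\mathbb{E}[X_\tau^2]-X_0^2 .
\]

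4. Express the exit law through the parity of $\tau$: which boundary is crossed at exit is encoded in the sign $(-1)^\tau$. Optional stopping for $X_k$ itself then gives $\mathbb{E}[X_\tau]=X_0$, which relates $P_{odd}$ to $\delta$ plus an overshoot term.

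5. Use $\mathbb{E}[X_\tau^2]\ge$ (boundary value)$^2$, since the overshoot lies outside the interval, together with Cauchy--Schwarz or the variance inequality $\mathbb{E}[X_\tau^2]\ge \mathbb{E}[X_\tau]^2$ applied conditionally on parity. Dividing by $\mu^2$ should give exactly $\mathbb{E}[\tau]/R\ge P_{odd}P_{even}$, because overshoots beyond the boundary can only increase $\mathbb{E}[X_\tau^2]$. This is the same mechanism as for a gambler's ruin with overshoot: $\mathbb{E}[\tau]\sigma^2\ge ab$-type bounds, where the exit probabilities are taken as the actual ones.

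The main obstacle is this lower bound: correctly bookkeeping the boundary at exit so that the inequality is exact, with no error term. The specific difficulties are the $\tau$ versus $\tau\pm1$ counting convention and the sign flip at odd steps. I would first do the computation for a symmetric continuous-time analogue (Brownian motion on $[0,\mu]$, where equality holds) to fix conventions. I would then check that the discrete overshoot enters only with a favourable sign. The upper-bound moment control, meaning the overshoot being $O(\sigma R^{1/p})$-type via the $p$-th moment assumption \eqref{eq:moment_assumption}, is inherited from the earlier results and needs no new work.
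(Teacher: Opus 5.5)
Your upper bound is circular as written. In the paper \Cref{th:2} is not an independent input. It is deduced from \Cref{th:3} and \Cref{th:4} by exactly your substitution $P_{odd}P_{even}=\delta(1-\delta)+\epsilon(1-2\delta)-\epsilon^2$, run in the other direction. \Cref{th:3} is also established inside the proof of \Cref{th:4}. It needs the overshoot bounds $\mathbb{E}[|T'_\tau|]\lesssim\mu R^{-(1/2-1/p)}$ and $\mathbb{E}[|T'_\tau|^2]\lesssim\mu^2R^{-(1-2/p)}$, and these rest on the a priori estimate $\mathbb{E}[\tau]/R=O(1)$ extracted from Wald's second identity. So the ``moment control inherited from earlier results'' is precisely what still has to be proved.

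One repair is to take $|\mathbb{E}[\tau]/R-\delta(1-\delta)|\lesssim R^{-(1/2-1/p)}$ from \Cref{th:5} with $m=1$, where $P_1(\delta)=\delta(1-\delta)$ and the proof does not use \Cref{th:4}. \Cref{th:3} then follows from \Cref{lem:overshoot} and $\mathbb{E}[S_\tau]=0$. The paper instead argues directly. For $S_n=\sum_{i\le n}(-1)^{i+1}T'_i$, which exits at $\Delta$ on even $\tau$ and at $\Delta-\mu$ on odd $\tau$, Wald's second identity and conditioning on parity give
\[
\sigma^2\mathbb{E}[\tau]=P_{even}\Delta^2+P_{odd}(\mu-\Delta)^2+\varepsilon,\qquad 0\le\varepsilon\le2\mu\,\mathbb{E}[|T'_\tau|]+\mathbb{E}[|T'_\tau|^2],
\]
since each overshoot is at most the last increment. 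Then $|T'_\tau|^p\le\sum_{i\le\tau}|T'_i|^p$ together with \eqref{eq:moment_assumption} bounds first $\mathbb{E}[\tau]/R$ and then $\varepsilon$. The completed square $P_{even}\delta^2+P_{odd}(1-\delta)^2=P_{odd}P_{even}+(\delta-P_{odd})^2$ turns this into $\mathbb{E}[\tau]/R-P_{odd}P_{even}=\varepsilon/\mu^2+(\delta-P_{odd})^2$. That gives both inequalities at once: the lower one because both terms are non-negative, the upper one from the bound on $\varepsilon$ and \Cref{th:3}.

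Your lower bound is not yet proved, and the mechanism you emphasise gives a different inequality. In your centred coordinates both boundaries sit at $\pm\mu/2$, so ``$\mathbb{E}[X_\tau^2]\ge(\text{boundary value})^2$'' yields $\sigma^2\mathbb{E}[\tau]\ge\mu^2/4-X_0^2=\mu^2\delta(1-\delta)$, i.e.\ $\mathbb{E}[\tau]/R\ge\delta(1-\delta)$. That bound involves the unknown $\delta$. It does not imply $\mathbb{E}[\tau]/R\ge P_{odd}P_{even}$, since $P_{odd}P_{even}-\delta(1-\delta)=\epsilon(1-2\delta)-\epsilon^2$ can be positive.

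The other option you list does work once carried out. Let $m_{odd},m_{even}$ be the conditional means of $X_\tau$ given $\tau$ odd or even. Since $\mathbb{E}[X_\tau]=X_0$, we get $\sigma^2\mathbb{E}[\tau]=\operatorname{Var}(X_\tau)\ge\operatorname{Var}\bigl(\mathbb{E}[X_\tau\mid\text{parity of }\tau]\bigr)=P_{odd}P_{even}(m_{odd}-m_{even})^2\ge P_{odd}P_{even}\,\mu^2$. The last step holds because the two conditional means lie beyond opposite edges of a strip of width $\mu$. This law-of-total-variance argument is a clean, exact alternative to the paper's completed square.

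Like the paper's decomposition, it needs the exit side to be fixed by the parity of $\tau$. That holds for the actual double-tick rule, where the run ends only when $D_k<0$. It fails for the two-sided exit of $D_k$ from $[0,\mu]$ that you wrote in step 2.
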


Here $\kappa_{1,p} > 0$ and $\kappa_{2,p} > 0$ do not depend on $R$ or $\delta$. In the limit of large $R$, \cref{eq:case_B_eq2} is equivalent to

\begin{equation}
    \frac{\mathbb{E}[\tau]}{R} = P_{odd}P_{even} + O\left(\frac{1}{R^{1/2-1/p}}\right).
\end{equation}

Importantly, all of the variables appearing in the upper and lower bounds can be experimentally determined to arbitrary precision, thereby allowing the precision of a clock $R$ to be characterised, as well as allowing measurements of times shorter than $\mu$, both with a bounded uncertainty that decreases as $R^{-(1/2-1/p)}$. The proof of \Cref{th:4} can be found in \Cref{sec:app_proofs}.

Finally, the martingale method gives a hierarchy of estimators for arbitrary non-degenerate initial offsets. The following result generalises \Cref{th:2} to higher moments of $\tau$. To state it precisely, write $\delta:=\Delta/\mu$ and assume $0<\delta<1$.

\begin{restatable}{theorem}{theoremfive}
\label{th:5}
Fix $m\in\mathbb{N}_{>0}$. Assume that the centred ticking distribution $T'_i$ satisfies \cref{eq:moment_assumption} for some $p>2m$, and that $0<\delta<1$. Let $P_0(\delta):=1$. For $j\ge1$, let $P_j$ be the unique polynomial of degree $2j$ satisfying
\begin{align}
    P_j(0)&=P_j(1)=0,\\
    \nonumber\frac{1}{2}\frac{d^2}{d \delta^2} P_j(\delta)&=-jP_{j-1}(\delta),
    \qquad 0<\delta<1 .
\end{align}
Then there exists a coefficient $\kappa_{m,p}>0$, independent of $R$ and $\delta$, such that, for all sufficiently large $R$,
\begin{equation}
    \left|
    \frac{\mathbb{E}[\tau^m]}{R^m}
    -P_m(\delta)
    \right| \le
    \frac{\kappa_{m,p}}{R^{1/2-1/p}}.
\end{equation}

\end{restatable}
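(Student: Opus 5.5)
We reduce the double-tick process to a random walk exiting an interval and then use polynomial martingales, following the strategy behind \Cref{th:2}. Write $A_n=\sum_{i\le n}T^A_i$ and $B_n=\Delta+\sum_{i\le n}T^B_i$. Alternation holds as long as $A_k< B_k< A_{k+1}$ for every $k$. Normalise by $\mu$ and define the walk
\[
X_k:=\frac{B_k-A_k}{\mu}=\delta+\frac{1}{\mu}\sum_{i\le k}(T^{B\prime}_i-T^{A\prime}_i).
\]
Its increments are i.i.d., centred, and have variance $2\sigma^2/\mu^2=2/R$. The double tick occurs at the first $k$ for which $X_k$ leaves $(0,1)$, up to a careful accounting of whether the offending comparison is $A_k<B_k$ or $B_k<A_{k+1}$. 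Call this exit index $K$. Then $\tau=2K+O(1)$, where the $O(1)$ term is $0$ or $\pm1$ according to the exit side; this parity is what \Cref{th:3} exploits. Since the expected $K$ is of order $R$, it suffices to show
\[
\mathbb{E}[(2K)^m]/R^m\to P_m(\delta)
\]
with the stated rate. The $O(1)$ correction contributes $O(\mathbb{E}[K^{m-1}])=O(R^{m-1})$, which is negligible relative to $R^m$. (Checking normalisations: with per-step variance $2/R$ the continuum exit time from $(0,1)$ in units of $R$ solves $\tfrac{1}{2}\cdot 2\,u''/2=-1$, and the factor $2$ from $\tau\approx 2K$ should reproduce exactly the ODE $\tfrac12 P_j''=-jP_{j-1}$. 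I will fix constants so that this matches \Cref{th:2} at $m=1$.)

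The martingale construction proceeds by induction on $m$. Consider the process $M^{(m)}_k:=\sum_{j=0}^{m}\binom{m}{j}(k/R)^{m-j}\,c_j P_j(X_k)$, in the spirit of the Brownian identity that $t^{m-j}P_j(B_t)$ combinations are martingales. Apply Taylor expansion to $\mathbb{E}[P_j(X_{k+1})-P_j(X_k)\mid\mathcal F_k]$. The second-order term equals $\tfrac{1}{R}P_j''(X_k)$ and cancels against the time increment by the defining ODE. The first-order term vanishes because the increments are centred. Third and higher order terms are bounded by $\mathbb{E}|\xi|^3$ and higher moments, which are $O(R^{-3/2})$ by \cref{eq:moment_assumption}. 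Since $P_j$ is a polynomial, this expansion is exact at finite order. The result is that $M^{(m)}$ is a martingale up to a drift error of $O(R^{-3/2})$ per step, multiplied by $(k/R)^{m-j}$.

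Doob's optional stopping theorem applies at $K$, since $K$ has all moments; this follows from a geometric tail bound obtained by blocking $R$ steps at a time. It gives $\mathbb{E}[M^{(m)}_K]=M^{(m)}_0+\text{error}$. At time $0$ the process equals $c_mP_m(\delta)$. At time $K$, if $X_K$ lay exactly in $\{0,1\}$, then every $P_j(X_K)$ with $j\ge1$ would vanish, leaving $(K/R)^m$. The accumulated drift error is bounded by $\mathbb{E}[K]\cdot R^{-3/2}\cdot\mathbb{E}[(K/R)^{m-1}]$-type terms, which after Hölder are $O(R^{-1/2})$ using the inductive bounds on lower moments.

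The main obstacle is the overshoot. At exit, $X_K$ lies outside $[0,1]$ by an amount of order $\sigma/\mu=R^{-1/2}$ typically, but the overshoot has heavy-ish tails. Hence $P_j(X_K)\ne0$, and I need $\mathbb{E}[(K/R)^{m-j}|P_j(X_K)|]$ to be small. I will bound this term using $|P_j(X_K)|\lesssim|\text{overshoot}|$ (plus higher powers). For the maximal overshoot among the first $\sim K$ steps I will use the bound $\mathbb{E}[\max_{i\le n}|\xi_i|^p]\le n\,\mathbb{E}|\xi|^p$, which yields $\|\max\xi\|_p\lesssim (n)^{1/p}R^{-1/2}$ with $n\sim R$. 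This is precisely the origin of the rate $R^{-1/2+1/p}$. Hölder's inequality is then applied with exponents $p/(p-1)$ and $p$, which requires $\mathbb{E}[(K/R)^{(m-j)p/(p-1)}]$ to be bounded; this in turn requires the moment condition $p>2m$ so that powers of the overshoot up to degree $2m$ of $P_m$ remain integrable. Unwinding the induction and dividing by $c_m$ gives $|\mathbb{E}[\tau^m]/R^m-P_m(\delta)|\le\kappa_{m,p}R^{-(1/2-1/p)}$, with constants uniform in $\delta\in(0,1)$.
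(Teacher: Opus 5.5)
Your analytic core matches the paper's. You evaluate the caloric polynomial $h(t,x)=\sum_j\binom{m}{j}t^{m-j}P_j(x)$ along the walk, get a one-step drift of order $R^{-3/2}$ from an exact Taylor expansion, and control the overshoot through $p$-th moments of the increments.

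The genuine gap is the reduction to the two-tick walk $X_k=(B_k-A_k)/\mu$. The comparison $A_k<B_k$ is indeed $X_k>0$, but $B_k<A_{k+1}$ reads $X_k<1+T^{A\prime}_{k+1}/\mu$. That boundary moves with an increment not yet contained in $X_k$. So A can tick twice (giving $\tau=2k+1$) while $X_k<1$. In that case $X_{k+1}\ge 1+T^{B\prime}_{k+1}/\mu$, so a negative enough $T^{B\prime}_{k+1}$ keeps $X_{k+1}$ in $(0,1)$, and the first exit $K$ of your walk can come unboundedly later. Conversely, $X_k\ge1$ does not force a double tick when $T^{A\prime}_{k+1}>\mu(X_k-1)$. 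Such events are not rare near the upper boundary. Hence $\tau=2K+O(1)$ with a correction in $\{0,\pm1\}$ is false, your martingale argument computes moments of the wrong stopping time, and no choice of the constants $c_j$ fixes this.

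The paper takes one step per tick instead. With $X_n=\delta-S_n/\mu$ and $S_n=\sum_{i\le n}(-1)^{i+1}T_i'$, one has $X_{2k}=(B_k-A_k)/\mu$ and $X_{2k+1}=(B_k-A_{k+1})/\mu+1$. Both comparisons then become crossings of the fixed levels $0$ and $1$ by a single walk with independent increments of variance exactly $1/R$, and $\tau$ is its exit time from $[0,1)$. The equation $\partial_t h+\tfrac12\partial_x^2h=0$ then matches the walk with no rescaling. Alternatively, you could keep your walk, let $K$ be the physical index, and stop at the genuine stopping time $K+1$. But then the moving boundary must be absorbed into the overshoot, and $h$ must be controlled slightly outside $[0,1]$.

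Three further estimates do not close as written.

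First, the accumulated drift is $R^{-3/2}\,\mathbb{E}\bigl[\sum_{n<\tau}(1+(n/R)^{m-1})\bigr]\le R^{-1/2}\,\mathbb{E}\bigl[\tau/R+(\tau/R)^m\bigr]$. This contains the $m$-th moment itself, not a product of lower moments, so induction on $m$ cannot control it. You need a bound on $y=\mathbb{E}[(\tau/R)^m]$ that is uniform in $R$ and $\delta$. The paper gets it from the self-improving inequality $y\le K+Ky/\sqrt{R}+Ky^{\theta}$, with $\theta=1-\frac1m(1-\frac2p)<1$. Your blocking argument could also give it, but only if made quantitative: blocks of length $cR$, with an exit probability per block bounded below uniformly in $R$ and the starting point. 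As written you use it only qualitatively.

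Second, H\"older with exponents $p/(p-1)$ and $p$, applied to $(\tau/R)^{m-j}V^{2j}$ with $V$ the overshoot, requires $\mathbb{E}[V^{2jp}]$. These are moments of order $2jp>p$, which \cref{eq:moment_assumption} does not provide. The paper uses the exponents $m/(m-j)$ and $m/j$ instead. These keep the overshoot exponent at most $2m<p$ and involve $\tau$ only through $y$, so they feed directly into the bound on $y$.

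Third, the maximal-increment estimate must be taken over the random horizon. Use $\mathbb{E}\bigl[\sum_{i\le\tau}|T_i'|^p\bigr]\le C_p\sigma^p\,\mathbb{E}[\tau]$, which holds because $\{\tau\ge i\}$ is independent of $T_i'$; a fixed $n\sim R$ does not suffice.
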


A proof of \Cref{th:5} can be found in \Cref{sec:app_subB}.

\subsection{Two equally precise clocks (same $R$) with different ticking frequencies and arbitrary initial offset ($\alpha \neq 1$, $N = 2$, $\Delta \in [0,\mu)$)} 
\label{sec:resultC}

We also study a more complex case in which the two copies are not required to tick at the same period $\mu$. This can correspond, for example, to the real-world scenario in which, due to manufacturing imperfections, the oscillation period of the clockwork differs from one copy to the other, while preserving the accuracy (i.e., preserving the ratio $\mu^2/\sigma^2$). Another situation in which this case can occur in reality is for two equal copies of a clock such that one experiences time dilation with respect to the other (potentially due to a non-inertial frame of reference or the action of gravity). 

Since the equations that lead to this solution are quite lengthy, they have been placed in \Cref{sec:app_eqC} rather than interrupting the main text. Readers interested in their complete mathematical expressions can refer to that section. The solutions in \Cref{sec:app_eqC} were obtained only for the limit of highly precise clocks ($R\to\infty$) whose ticking distribution approaches a so-called mesokurtic one ($\mathbb{E}[T_i'^4]/\sigma^4 - 3 = o(1)$, where $o(\cdot)$ stands for little-o notation). However, the methods shown in \Cref{sec:app_subB} may yield more general bounds, although the calculations quickly become more involved.

\Cref{sec:app_eqC} provides closed-form expressions from which one can simultaneously determine the precision $R$, the initial delay $\delta$ and the frequency correction $\alpha$ from experimentally measurable variables by using only two clocks of equal precision. In practical scenarios, this setup will not only allow the characterisation of a clock or the measurement of time intervals shorter than its resolution $\mu$, but will also enable the detection and measurement of frequency offsets, time-dilation factors \cite{Bothwell2022, Ashby2003, HafeleKeating1972}, Doppler shifts \cite{Delva2017} and similar effects.

In this more complex case, three main variables determine when the double tick is triggered, and hence the value of $\mathbb{E}[\tau]$. First, there is the precision $R$, so that more precise clocks naturally exhibit larger $\mathbb{E}[\tau]$. Second, the initial delay $\delta$ also affects $\mathbb{E}[\tau]$: the closer $\delta$ is to $1/2$, the larger $\mathbb{E}[\tau]$ can become. Finally, the difference in ticking period given by $\alpha$ modulates the mean number of ticks before the double tick, maximising it when $\alpha$ is close to $1$. In an experiment where the two copies are started in counter-phase ($\delta \approx 1/2$), $\mathbb{E}[\tau]$ is determined by the interplay between $R$ and $\alpha$. As in the previous case, to characterise $R$ with high accuracy the two copies must satisfy $\alpha \approx 1$. However, now the converse is also true: one can measure frequency offsets or time-dilation factors with arbitrary precision, given sufficiently precise clocks.

%%%%%%%%%%%%%%%%%%%%%%%%%%%%%%%%%%%%%%%%%%%%%%%%%%%%%%%%%%

\section{Example: a Gaussian clock}

%Definition of the simulation procedure and parameters chosen.
In this section, a numerical Monte Carlo simulation of the double-ticking clock experiment was performed to test the results from \Cref{sec:results}. To do this, a script generated random tick lengths one at a time, alternating between the clocks, and stopped only when it detected a double tick. The probability distribution of the ticks was chosen to be normal with a normalised mean of $\mu = 1$. 

\begin{figure}[htbp]
    \centering
    \includegraphics[width=0.45\textwidth]{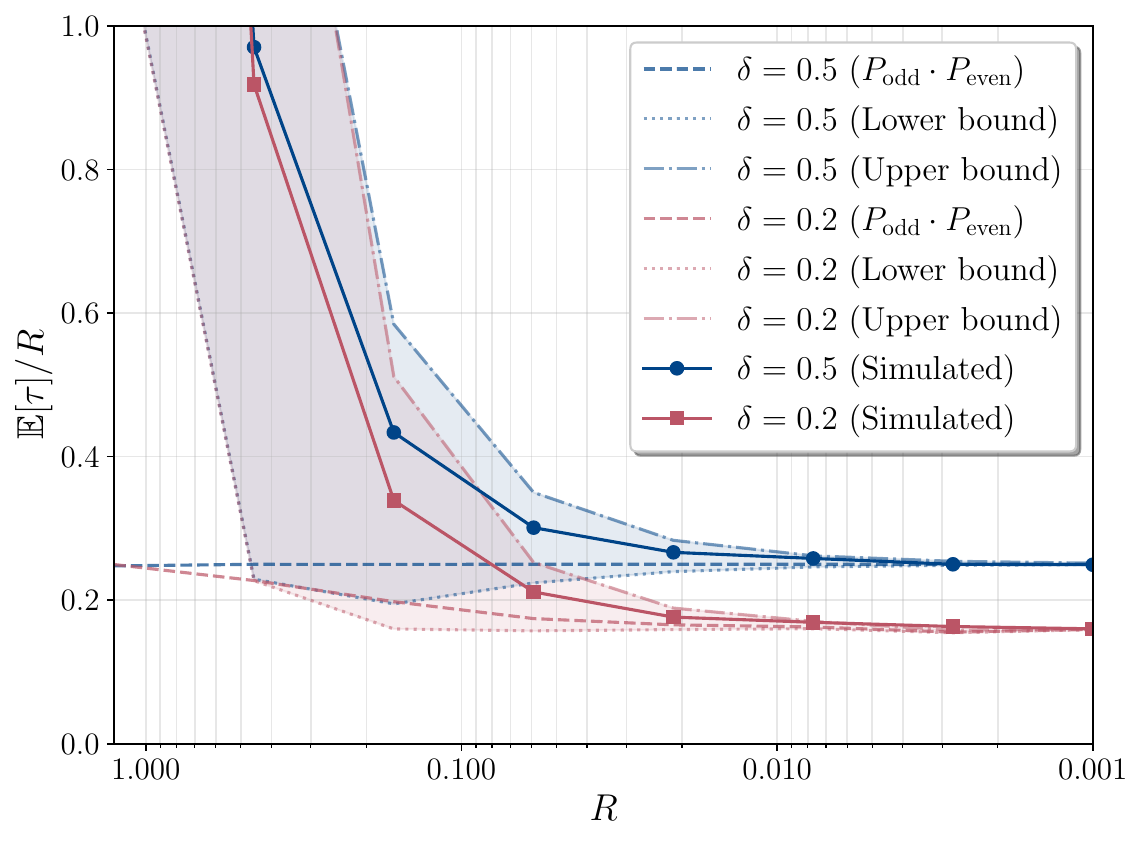}
    \caption{Monte Carlo simulation ($10000$ iterations performed) of two clocks following a normal ticking distribution with $\mu = 1$ and $R$ ranging from $10^2$ to $10^4$, both for $\delta = 0.5$ (blue) and $\delta = 0.2$ (salmon). The dotted (dash-dotted) line represents the lower (upper) bound from \Cref{sec:resultB}, and the dashed line shows the asymptotic limit of $\mathbb{E}[\tau]/R$ as measured in the experiment. Dots and straight lines represent the results from the actual simulations.}
    \label{fig:expBSimu}
\end{figure}

To test \Cref{sec:resultB}, a parametric sweep for $R$ was performed for two different values of $\delta$ ($\delta = 0.5$ and $\delta = 0.2$). Similarly, a parametric sweep for $\delta$ ranging from $0.1$ to $0.9$ was simulated for a fixed $R=10^4$, as shown in \Cref{fig:expBDeltaSimu}.

\begin{figure}[htbp]
    \centering
    \includegraphics[width=0.45\textwidth]{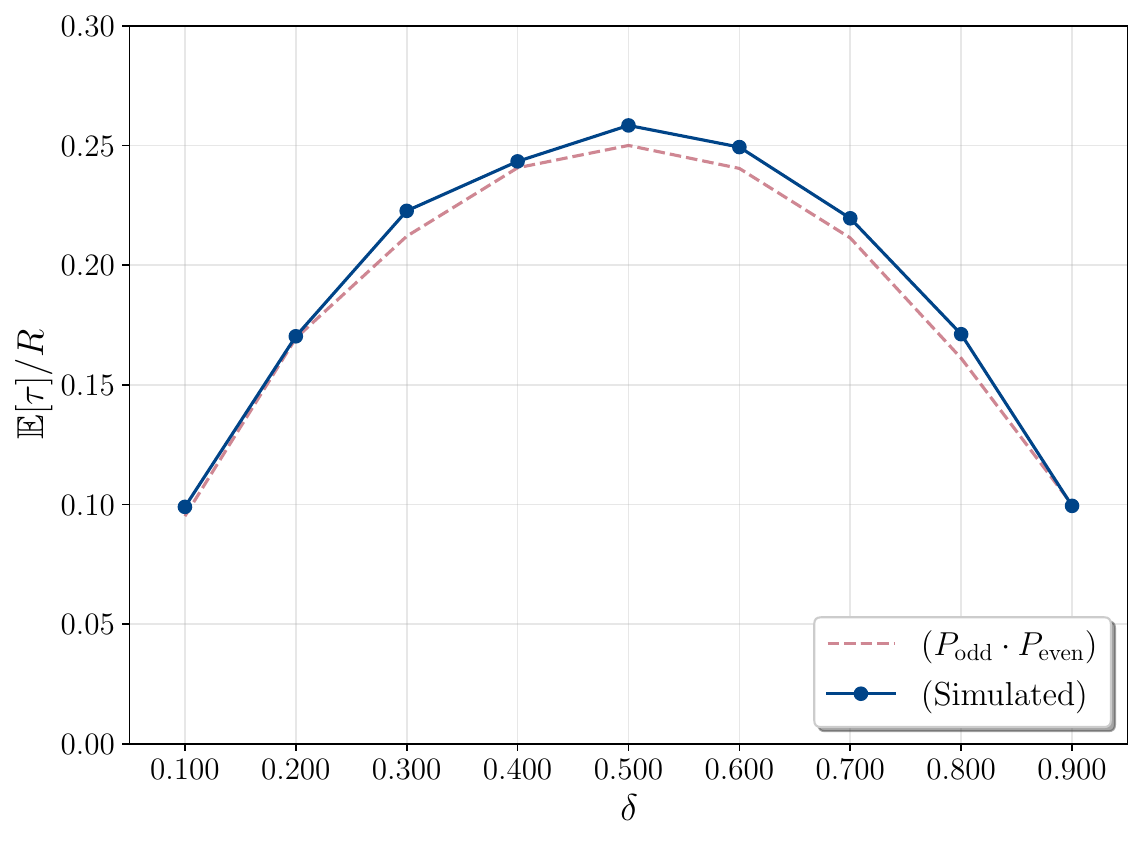}
    \caption{Clocks following a normal ticking distribution with $\mu = 1$ and $R=10^4$, for different values of $\delta$ ($10000$ iterations performed). The blue line represents the measured values, and the salmon dashed line the theoretically derived asymptotic bound $P_{odd}P_{even}$. The distribution follows a semicircle law.}
    \label{fig:expBDeltaSimu}
\end{figure}

The generated plots were obtained using only information available in a realistic experimental scenario, thereby allowing the experimentalist to estimate only the probability $P_{odd}$ and the mean value $\mathbb{E}[\tau]$. A clear asymptotic trend towards the value predicted by our results is observed, thus supporting the argument that the precision $R$ of a clock can be measured operationally.

%%%%%%%%%%%%%%%%%%%%%%%%%%%%%%%%%%%%%%%%%%%%%%%%%%%%%%%%%%%%%%%%%%%%%%%%%%%%%%%%%%

\section{Discussion and outlook}
\label{sec:discussion}

% $R$ is physical.
This work has established several important results in clock characterisation without the need for an external time reference. The theoretically defined figure of merit for the precision of a clock, $R=\mu^2/\sigma^2$, is shown to be physically meaningful through our experimental approaches. Our results show that $R=\mu^2/\sigma^2$ is operationally accessible under the stated i.i.d. clock model. The bounds in \Cref{sec:results} imply that the estimation error can be made arbitrarily small in the asymptotic regime (large $R$ and/or many independent trials). We therefore provide an operational route to estimate $R$, with explicit finite-sample error terms given in \Cref{th:1,th:3,th:4}.

% It is posible to (asymptotically) relate $R$ to $\mathbb{E}[\tau]$
In this regard, we have demonstrated a clear asymptotic relation between the theoretical parameter $R$ and the experimentally accessible $\mathbb{E}[\tau]$, as well as with the well-known Allan variance. These relationships are particularly valuable since $\mathbb{E}[\tau]$ can be directly measured through the double-tick experiment described in \Cref{sec:theoframe}, while $R$ has traditionally been considered a theoretical construct requiring an ideal time reference. The proportionality between these figures of merit enables a practical bridge between theory and experiment, validating both the physicality of $R$ and the utility of $\mathbb{E}[\tau]$ as meaningful indicators of clock precision. Moreover, the direct relationship between $R$ and the Allan variance for independent ticking clocks creates a powerful link between the current literature in experimental and theoretical studies of timekeeping devices.

% The proposed experiments can be used not only to characterise a clock but also to measure timescales smaller than their ticking time with arbitrary precision (by measuring $\delta$) or time dilation factor (by measuring $\alpha$).
The methods developed in this paper extend beyond mere clock characterisation to practical metrological applications. By measuring $\delta$, we can determine time intervals much smaller than the clock's natural ticking period ($\Delta \ll \mu$) with a precision that improves as $R^{-(1/2-1/p)}$ under \cref{eq:moment_assumption}. This capability has profound implications for high-precision timing applications where sub-period measurements are critical. Similarly, our approach enables direct measurement of time dilation factors (through $\alpha$) without relying on absolute time references, offering new opportunities for experimental tests of relativistic effects in scenarios where gravitational fields or non-inertial reference frames introduce frequency shifts between otherwise identical clocks \cite{Bothwell2022, Ashby2003, HafeleKeating1972}.

% Drawbacks of the current method: convergence may be slow for incredibly precise clocks (it could be made faster by using more than $N = 2$ nevertheless) and, from an experimental point of view, manufacturing two copies of the same clock with equal $R$ may be challenging.
There are some practical considerations that should be taken into account regarding the experimental implementation of the methods showcased in \Cref{sec:results}. For example, in the case of extremely precise clocks (large $R$ values), convergence to the asymptotic relationships presented may be slow, potentially requiring many experimental runs to achieve the desired precision. This challenge could be partially mitigated by employing more than $N = 2$ copies in synchronised configurations, though at the cost of increased experimental complexity. Moreover, from a practical standpoint, manufacturing multiple copies of a clock with identical $R$ values presents significant engineering challenges, particularly for high-precision devices. Small manufacturing variations could introduce systematic errors that must be carefully accounted for in experimental implementations, since they can lead to two non-identical copies of the clock with slightly different $R$.

Despite these challenges, the framework presented here provides a robust foundation for clock characterisation that aligns with the fundamental constraints imposed by quantum theory. By eliminating the theoretical requirement for an unattainable ideal time reference, these methods offer both conceptual clarity and practical utility for advancing the science of chronometry.

%%%%%%%%%%%%%%%%%%%%%%%%%%%%%%%%%%%%%%%%%%%%%%%%%%%%%%%%%%%%%%%%%%%%%%%%%%%%%%%

\section{Conclusion}

This study addresses a fundamental challenge in quantum metrology: determining clock precision without relying on an external time reference. We have demonstrated that multiple non-ideal clocks can collectively characterise their own precision through various experimental approaches without requiring an external reference frame.

Our methods establish a clear relationship between the theoretical precision parameter $R = \mu^2/\sigma^2$ and experimentally accessible measurements such as $\mathbb{E}[\tau]$ or the Allan variance, validating the physicality of $R$ and providing operational means to measure it with bounded uncertainty and arbitrary precision. This relationship creates a valuable bridge between theoretical models and experimental implementations of timekeeping devices.

Beyond clock characterisation, our framework enables precise measurements of sub-period time intervals, frequency offsets, time-dilation factors and Doppler shifts each with an uncertainty that decreases as $R^{-(1/2-1/p)}$ under the stated moment assumptions. These capabilities have significant implications for quantum metrology and fundamental physics, offering new approaches to temporal measurement that align with the constraints imposed by quantum theory.

While practical implementation challenges exist, these methods provide a robust foundation for advancing chronometry within the fundamental limitations of quantum mechanics, eliminating the theoretical requirement for an unattainable ideal time reference.

\begin{acknowledgments}
This work was supported by the MSCA Cofund QuanG (Grant Number: 101081458), funded by the European Union. The views and opinions expressed are those of the author(s) only and do not necessarily reflect those of the European Union. Neither the European Union nor the granting authority can be held responsible for them.

All equations and ideas, except for \Cref{th:5}, were derived by hand. A simpler version of \Cref{th:5} was also derived by hand, but its final generalised form was proved in an AI-assisted manner using Claude, Qwen and ChatGPT.

\iffalse
\begin{figure}[htbp]
    \centering
    \begin{minipage}{0.12\textwidth}
        \centering
        \includegraphics[width=\textwidth]{img_files/logo_EU.jpg}
    \end{minipage}
    \hspace*{0.08\textwidth} % Adds horizontal space between images
    \begin{minipage}{0.12\textwidth}
        \centering
        \includegraphics[width=\textwidth]{img_files/logo-quang.png}
    \end{minipage}
\end{figure}
\fi

\end{acknowledgments}

\appendix
% Ensure cleveref identifies appendix sections by their appendix designation
% rather than referring to them as ordinary sections.
\crefname{appendix}{appendix}{appendices}
\Crefname{appendix}{Appendix}{Appendices}
\crefalias{section}{appendix}
\crefalias{subsection}{appendix}
\renewcommand{\thesubsection}{\thesection.\Alph{subsection}}

%TODO
%\mpwb{TODO: repositorio en zenodo}
%\mpwb{TODO: Revisar con Wolfram}

\section{Equations for the case $N=2$, $\Delta \in [0,\mu)$, $\alpha > 0$}
\label{sec:app_eqC}

\subsection{Full expressions for the solutions to \Cref{sec:resultC}}

Our precision measure $R$ is given by 

\begin{widetext}
    \begin{equation}
    \label{eq:long_3}
        R = \frac{2(1+\alpha^2(\mathbb{E}[\tau]-1)+\mathbb{E}[\tau])}{4\delta^2+2\mathbb{E}[\tau]\delta(\alpha-1) + (\alpha-1)^2\mathbb{E}[{\tau }^2] 
        - P_{even}(4(\alpha+1)\delta +1+ \mathbb{E}[\tau]+\alpha(2+\alpha-\alpha \mathbb{E}[\tau]))} + o(R),
    \end{equation}
    which is a function of the experimentally measurable parameters $\mathbb{E}[\tau]$, $\mathbb{E}[\tau ^2]$ and $P_{even}$, with $P_{odd}:=1-P_{even}$, and the relevant quantities $\alpha$ and $\delta$ (i.e. the ratio between the mean ticking period of the two clocks and the normalised initial delay between them). The $\alpha$ parameter is expressed solely as a function of $\delta$ and these experimentally measurable parameters, where $\alpha$ is given by 
    \begin{equation}
    \label{eq:long_2}
        \alpha = \frac{\mathbb{E}[\tau]+P_{even}-2\delta}{\mathbb{E}[\tau]-P_{even}} + o(1)
    \end{equation}
    while $\delta$ is the solution to the following equation once substituting in for $R$ and $\alpha$ via~\cref{eq:long_3,eq:long_2}.
    \begin{align}
        \begin{split}
            &(2\delta+(\alpha-1)\mathbb{E}[\tau])^4P_{odd}R^2 + (1+\alpha-2\delta+\mathbb{E}[\tau]-\alpha\mathbb{E}[\tau])^4P_{even}R^2+48\mathbb{E}[{\tau }^2]\\
            &= 24\mathbb{E}[\tau]R((2\delta + (\alpha-1)\mathbb{E}[\tau])^2- (1+\alpha)(4\delta+2(\alpha-1)\mathbb{E}[\tau]-\alpha-1)P_{even}) + o(R^2),
        \end{split}\label{eq:long_1}
    \end{align}
\end{widetext}

%\mpwb{NOTA: se puede derivar un resultado usando O mayúscula?}

where the little o notation in~\cref{eq:long_3,eq:long_2,eq:long_1} refers to the limit of large $R$. These three equations allow one to solve for the three unknowns ($R$, $\alpha$, $\delta$) and obtain a full characterisation of our clocks in terms of variables one can measure operationally ($\mathbb{E}[\tau]$, $\mathbb{E}[\tau ^2]$, $P_{even}$).

While these equations hold in the limit of large $R$, it may be possible to obtain bounds analogous to the ones presented in \cref{eq:case_B_eq2} by following a similar procedure to that proposed therein.

\subsection{Proof of \Cref{eq:long_1,eq:long_2,eq:long_3}}
\label{sec:app_subC}

In the case where one of the copies is a dimensionless factor $\alpha$ times faster than the other, the mathematical definition of the problem varies slightly. First, note that both copies will still maintain the same $R$ (the $\alpha^2$ factors cancel out in its definition). However, let, as in the main text, the superscript $(A)$ or $(B)$ indicate the copy of the clock the expression refers to (the superscript will be omitted in expressions that are copy-independent such as $T_i'$). Then, since $\mathbb{E}[T_i^{(A)}] := \mu^{(A)} \neq \alpha \mu^{(A)}  = \mu^{(B)} =: \mathbb{E}[T_i^{(B)}]$, simply grouping the ticks from both copies into a single summation would lead to a sum of non-identically distributed variables. As a result, extra care must be taken when working with the mean-zero random variables $T_i' := T_i-\mathbb{E}[T_i]$.

With this in mind, one can restate the problem in terms of $\alpha$ as follows:

\begin{equation}
\label{eq:app_C1}
    \text{Double tick} \iff
    \begin{cases}
      \sum_{i = 1}^n (-1)^{i+1}T_i' + d_n
      < \Delta - \mu^{(A)}\frac{1+\alpha}{2},\\
      \sum_{i = 1}^n (-1)^{i+1}T_i' + d_n
      > \Delta.
    \end{cases}  
\end{equation}

Here $d_n := (1-\alpha)(n/2)\mu^{(A)}$ represents a drift-generated delay, and the first (second) case corresponds to copy $A$ ($B$) producing the double tick. We can now define $S_{\alpha, n} := \sum_{i = 1}^n (-1)^{i+1}T_i'$ and the stopping time $\tau_\alpha$ by

\begin{align}
\label{eq:app_C2}
    \nonumber \tau_{\alpha} :=\min\biggl\{n: & \left(n \text{ odd} \land S_{\alpha, n}<\Delta- \mu^{(A)}\frac{1+\alpha}{2}\right)\\ 
    & \lor \left(n \text{ even} \land S_{\alpha, n}>\Delta\right)\biggr\}.
\end{align}

With these definitions, one finds that the following is a martingale:

\begin{equation}
\label{eq:app_C3}
    S_{\alpha,n}^2 - n\ \frac{{\sigma^{(A)}}^2 + {\sigma^{(B)}}^2}{2} - \frac{{\sigma^{(A)}}^2 - {\sigma^{(B)}}^2}{2},
\end{equation}

and, under the assumption of a sufficiently well-behaved ticking distribution, applying the Optional Stopping Theorem to \cref{eq:app_C3} leads to:

\begin{align}
\label{eq:app_C4}
    \nonumber {\sigma^{(A)}}^2\mathbb{E}[\tau_{\alpha}] & = \mathbb{E}[S_{\alpha, \tau}^2|\tau_{\alpha} \text{ even}]\ P[\tau_{\alpha} \text{ even}]\\
    & +  \mathbb{E}[S_{\alpha, \tau}^2|\tau_{\alpha} \text{ odd}]\ P[\tau_{\alpha} \text{ odd}].
\end{align}

Using the same technique as in \Cref{sec:app_subB}, one arrives at the bounds in \cref{eq:app_C5,eq:app_C6}. 

\begin{widetext}
\begin{equation}
\label{eq:app_C5}
\begin{aligned}
    &\mathbb{E}[S_{\alpha, \tau}^2] < P_{odd} \biggr[ \left( \Delta - \mu \frac{1+\alpha}{2}\right)^2 + \frac{\mathbb{E}[{\tau_\alpha}^2]}{4}\mu^2(1-\alpha)^2 - 2 \left( \Delta - \mu\frac{1+\alpha}{2}\right)\mu(1-\alpha)\frac{\mathbb{E}[\tau_\alpha]}{2} \biggr] \\
    +P_{even}  \biggr[ \Delta^2 +& \frac{\mathbb{E}[{\tau_\alpha}^2]}{4}\mu^2(1-\alpha)^2 + \mu^2\frac{(1+\alpha)^2}{4} - \frac{\mathbb{E}[\tau_\alpha]}{2}\mu^2(1+\alpha)^2
    -2\Delta\frac{\mathbb{E}[\tau_\alpha]-1}{2}\mu^2(1+\alpha)^2 + \sigma^2 + 2\mathbb{E}[S_{\alpha, \tau-1}^2T_\tau'] \biggr]
\end{aligned}
\end{equation}
%\todo{revisar si el cuadrado va dentro de la media}

The complementary bound is

\begin{equation}
\label{eq:app_C6}
\begin{aligned}
    &\mathbb{E}[S_{\alpha, \tau}^2] > P_{odd}\biggr[ \left(\Delta - \mu\frac{1+\alpha}{2}\right)^2 + \frac{\mathbb{E}[{\tau_\alpha}^2]}{4}\mu^2(1-\alpha)^2 + \frac{\mu^2(1 - \alpha)^2}{4} - \frac{\mathbb{E}[\tau_\alpha]}{2}\mu^2(1+\alpha)^2 \\
    - 2 &\left( \Delta - \mu\frac{1+\alpha}{2} \right) \mu(1-\alpha)\frac{\mathbb{E}[\tau_\alpha-1]}{2} + \sigma^2 + 2\mathbb{E}[S_{\alpha, \tau-1}T_\tau'] \biggr] + P_{even}\biggr[ \Delta^2 + \frac{\mathbb{E}[{\tau_\alpha}^2]}{4}\mu(1-\alpha)^2 - 2\Delta\mu(1-\alpha)\frac{\mathbb{E}[\tau_\alpha]}{2} \biggr].
\end{aligned}
\end{equation}
\end{widetext}

Finally, by letting $R \to \infty$, both bounds converge to the expression \cref{eq:app_C7}, where we have already substituted $\Delta = \delta\mu$. 

\begin{widetext}
\begin{align}
\label{eq:app_C7}
    \nonumber \frac{1}{\mu^2}\ \mathbb{E}[S_{\alpha, \tau}^2] = &P_{odd}\biggr[ \left( \delta - \frac{1+\alpha}{2}\right)^2 + \frac{\mathbb{E}[{\tau_\alpha}^2]}{4} (1-\alpha)^2 -
    \frac{\mathbb{E}[\tau_\alpha]}{2}(1-\alpha)\left( \delta - \frac{1+\alpha}{2}\right)\biggr] + \\
    &P_{even}\biggr[\delta^2 + \frac{\mathbb{E}[{\tau_\alpha}^2]}{4} (1-\alpha)^2 - \frac{\mathbb{E}[\tau_\alpha]}{2}(1-\alpha)\delta \biggr] + o(1),
\end{align}
\end{widetext}

where $o(1)$ is little-o notation, meaning that equality holds in the limit $R \to \infty$. Substituting \cref{eq:app_C7} into \cref{eq:app_C4} yields an equation in three unknowns: $R$, $\delta$ and $\alpha$. One may find another such equation by using the Optional Stopping Theorem directly on $S_{\alpha, \tau}$, leading to $S_{\alpha, \tau} = S_{\alpha, 0} = 0$, which, in the limit of very precise clocks, simplifies to:

\begin{equation}
\label{eq:app_C8}
\begin{aligned}
    &P_{even} \biggr[ \delta - \frac{1+\alpha}{2}
    - \frac{\mathbb{E}[\tau_\alpha]}{2}(1-\alpha) \biggr]\\
    &\quad + P_{odd}\biggr[\delta-\frac{\mathbb{E}[\tau_\alpha]}{2}(1-\alpha) \biggr]
    = o(1),
\end{aligned}
\end{equation}

Solving this equation for $\alpha$ directly recovers \cref{eq:long_2}.

Finally, one may find the last necessary equation by applying the Optional Stopping Theorem in a similar fashion to \cref{eq:app_B12}, using an equivalent martingale obtained by substituting $\tau\to\tau_\alpha,\ S_n \to S_{\alpha,\tau}$. If one further assumes that the ticking distribution of the clock resembles a normal distribution as it becomes more accurate, that is, $\mathbb{E}[T_i'^4]/\sigma^4 - 3 = o(1)$, then in the limit $R\to\infty$, \cref{eq:long_3} is recovered.

\qed

\section{Proof of the main results}
\label{sec:app_proofs}

\subsection{Proof of \Cref{th:1}}
\label{sec:app_subA}

For convenience, we restate here \Cref{th:1}:

\theoremone*

\begin{proof}

We first consider the case of a single clock. Let $\vartheta_0$ denote the deterministic duration for which the experiment is run (i.e., a realisation of $\Theta_0$), and let $T_i$ be the duration of the $i$th inter-tick interval as defined in the main text. Since we are working with independent ticking clocks, $\{T_i\}_i$ are independent and identically distributed variables. Now, let the tick-counting function $\beta(t)$ be defined as follows:

%quitar teta 0 como media y dejar solol la media

\begin{equation}
\label{eq:app_A1}
    \beta(\vartheta_0) := \sup\biggl\{n\ :\sum_{i=1}^n T_i \le \vartheta_0\biggr\}.
\end{equation}

$\beta(\vartheta_0)$ is then a renewal process. Standard renewal-theory asymptotics (under a finite second moment of $T_i$) and an application of the Central Limit Theorem for this type of process imply \cite{andersen1987central, reinert2018bound, CLTrenew2}:

\begin{equation}
\label{eq:app_A2}
    \frac{\beta(\vartheta_0) - \vartheta_0/\mu}{\sqrt{\vartheta_0\sigma^2/\mu^3}} \to \mathcal{N}(0,1)\ \text{ as }\ \vartheta_0\to \infty.
\end{equation}

Equivalently, for the first two moments,

\begin{equation}
\label{eq:app_A3}
    \mathbb{E}[\beta(\vartheta_0)] = \frac{\vartheta_0}{\mu} + O(1),\qquad
    \operatorname{Var}(\beta(\vartheta_0)) = \frac{\sigma^2}{\mu^3}\vartheta_0 + O(1),
\end{equation}

so that, rearranging terms,

\begin{equation}
\label{eq:app_A4}
\frac{\mathbb{E}[\beta(\vartheta_0)]}{\operatorname{Var}(\beta(\vartheta_0))}
= R + O\!\left(\frac{1}{\vartheta_0}\right).
\end{equation}

We must now study how the result changes if we allow the duration of the experiment (i.e. $\vartheta_0$) to be a random time $\Theta_0$. On the one hand, by the law of total expectation, $\mathbb{E}[\beta(\Theta_0)] = \mathbb{E}[\mathbb{E}[\beta(\vartheta_0)|\Theta_0]]$. On the other hand, by the law of total variance, $\operatorname{Var}(\beta(\Theta_0)) = \mathbb{E}[\operatorname{Var}(\beta(\vartheta_0))|\Theta_0] + \operatorname{Var}(\mathbb{E}[\beta(\vartheta_0)|\Theta_0])$. Substituting these into \cref{eq:app_A4} gives, in the limit of large $\mathbb{E}[\Theta_0]$,

\begin{equation}
\label{eq:app_A5}
\operatorname{Var}(\beta(\Theta_0)) = \frac{\sigma^2}{\mu^2}\mathbb{E}[\beta(\Theta_0)] + \frac{\operatorname{Var}(\Theta_0)}{\mu^2} + O(1).
\end{equation}

One can recover an expression equivalent to \cref{eq:app_A4} by dividing both sides by $\mathbb{E}[\beta(\Theta_0)]$, as long as $\operatorname{Var}(\Theta_0)$ grows sub-linearly with respect to $\mathbb{E}[\Theta_0]$. In particular, if $\operatorname{Var}(\Theta_0)=O(1)$ (small timing tolerance), then
\begin{equation}
\label{eq:app_A5b}
\frac{\mathbb{E}[\beta(\Theta_0)]}{\operatorname{Var}(\beta(\Theta_0))}
= R + O\!\left(\frac{1}{\mathbb{E}[\Theta_0]}\right).
\end{equation}

Now, we can include the remaining copies of the clock that are also ticking. Let $\tau_1(\vartheta_0), \ldots,\tau_N(\vartheta_0)$ be the tick-counting functions for the $N$ copies. Define the statistics of the observed number of ticks as:

\begin{equation}
\label{eq:app_A6}
    \widehat{\mu}_N := \frac{1}{N}\sum_{k=1}^N \tau_k(\vartheta_0),\ \ \ \ \widehat{\sigma}_N^2 := \frac{1}{N}\sum_{k=1}^N \big(\tau_k(\vartheta_0)-\widehat{\mu}_N\big)^2 .
\end{equation}

Because of the i.i.d. property of $\tau_k$, one can apply the strong law of large numbers,
\begin{equation}
\label{eq:app_A7}
\widehat{\mu}_N \xrightarrow[N\to\infty]{\text{a.s.}} \mathbb{E}[\beta(\vartheta_0)],
\ \ \ \ 
\widehat{\sigma}_N^2 \xrightarrow[N\to\infty]{\text{a.s.}} \operatorname{Var}[\beta(\vartheta_0)].
\end{equation}

Therefore, for fixed finite $\vartheta_0$,
\begin{equation}
\label{eq:app_A8}
\frac{\widehat{\mu}_N}{\widehat{\sigma}_N^2}
\xrightarrow[N\to\infty]{\text{a.s.}}
\frac{\mathbb{E}[\beta(\vartheta_0)]}{\operatorname{Var}[\beta(\vartheta_0)]}.
\end{equation}

To obtain an explicit rate, assume additionally that
\(\mathbb{E}[\beta(\vartheta_0)^4]<\infty\).
Then
\(\widehat{\mu}_N-\mathbb{E}[\beta(\vartheta_0)] = O_\mathbb{P}(N^{-1/2})\)
and
\(\widehat{\sigma}_N^2-\operatorname{Var}(\beta(\vartheta_0)) = O_\mathbb{P}(N^{-1/2})\).
Applying the delta method to $g(x,y)=x/y$ yields
\begin{equation}
\label{eq:app_A8b}
\frac{\widehat{\mu}_N}{\widehat{\sigma}_N^2}
=\frac{\mathbb{E}[\beta(\vartheta_0)]}{\operatorname{Var}(\beta(\vartheta_0))}
+ O_\mathbb{P}\!\left(\frac{1}{\sqrt{N}}\right).
\end{equation}

A combination of \cref{eq:app_A5b} and \cref{eq:app_A8b} concludes the proof.

\end{proof}

\subsection{Proofs of \Cref{th:2,th:3,th:4,th:5}}
\label{sec:app_subB}

\Cref{th:2} follows directly from \Cref{th:3,th:4}. Thus, we first prove these two results and then use them to prove \Cref{th:2}.

Let us start by restating and proving \Cref{th:4}.

\theoremfour*

\begin{proof}

Define the centred tick-time fluctuation $T_i':=T_i-\mu$ and the alternating sum
\begin{equation}
    S_n:= \sum_{i=1}^n (-1)^{i+1}T_i'.
\end{equation}
The increments $(-1)^{i+1}T_i'$ are independent and centred, and have variance $\sigma^2$. Define the stopping time

\begin{align}
\label{eq:app_B1}
    \tau := \min\bigl\{\,n:\,&(\Delta-\mu \ge S_n)\ \lor\ ( S_n > \Delta)\bigr\}.
\end{align}

By the stopping rule, on $\{\tau \text{ even}\}$ the process crosses the upper boundary $\Delta$, and on $\{\tau \text{ odd}\}$ it crosses the lower boundary $\Delta-\mu$.

Wald's identities give
\begin{align}
    \mathbb{E}[S_\tau] &= 0, \label{eq:wald1} \\
    \mathbb{E}[S_\tau^2] &= \sigma^2\mathbb{E}[\tau]. \label{eq:wald2}
\end{align}

Writing $P_{even} := P[\tau\text{ even}]$ and $P_{odd}:=P[\tau\text{ odd}]=1-P_{even}$, the law of total expectation gives

\begin{align}
\label{eq:app_B2}
    \sigma^2 \mathbb{E}[\tau]
    &= \mathbb{E}[S_\tau^2] \\
    \nonumber
    &= P_{even}\,\mathbb{E}[S_\tau^2\mid \tau\text{ even}]
     + P_{odd}\,\mathbb{E}[S_\tau^2\mid \tau\text{ odd}],
\end{align}

where we have used the standard notation $P[A|B]$ to denote the probability of $A$ conditioned on $B$. Define the overshoots
\begin{align}
    O_{+} &:= S_\tau-\Delta \ge 0 \quad \text{on } \{\tau\text{ even}\},\\
    O_{-} &:= (\Delta-\mu)-S_\tau \ge 0 \quad \text{on } \{\tau\text{ odd}\}.
\end{align}

Then, by the definition of the stopping problem \cref{eq:app_B1},

\begin{align*}
    \mathbb{E}[S_\tau^2\mid \tau\text{ even}]
    &= \Delta^2 + 2\Delta\,\mathbb{E}[O_{+}\mid \tau\text{ even}] + \mathbb{E}[(O_{+})^2\mid \tau\text{ even}],\\
    \mathbb{E}[S_\tau^2\mid \tau\text{ odd}]
    &=(\Delta-\mu)^2
    +2(\mu-\Delta)\,\mathbb{E}[O_{-}\mid \tau\text{ odd}]\\
    &\quad+\mathbb{E}[O_-^2\mid \tau\text{ odd}].
\end{align*}

Substituting into \cref{eq:app_B2} gives
\begin{equation}
\label{eq:wald_decomp}
    \sigma^2\mathbb{E}[\tau]
    = P_{even}\,\Delta^2 + P_{odd}\,(\Delta-\mu)^2 + \varepsilon,
\end{equation}

where the error term $\varepsilon$ is given by

\begin{equation}
\label{eq:def_eps}
\begin{aligned}
    \varepsilon
    &:= 2\,(P_{even}\Delta\,\mathbb{E}[O_{+}\mid \tau\text{ even}]
    + P_{odd}(\mu-\Delta)\,\mathbb{E}[O_{-}\mid \tau\text{ odd}])\\
    &\quad+ P_{even}\,\mathbb{E}[O_+^2\mid \tau\text{ even}]\\
    &\quad+ P_{odd}\,\mathbb{E}[O_-^2\mid \tau\text{ odd}].
\end{aligned}
\end{equation}

Since $0\le \Delta\le \mu$, we have $\varepsilon\ge 0$, so \cref{eq:wald_decomp} immediately implies

\begin{equation}
\label{eq:tau_lower}
    \mathbb{E}[\tau]
    \ge \frac{P_{even}\,\Delta^2 + P_{odd}\,(\Delta-\mu)^2}{\sigma^2}.
\end{equation}

Since $S_{\tau-1}$ is still inside the interval $(\Delta-\mu,\Delta]$, the overshoot at time $\tau$ is no larger than the absolute size of the final centred increment:
\begin{equation}
    O_+ \le |T_\tau'| \quad \text{on } \{\tau\text{ even}\},\qquad
    O_- \le |T_\tau'| \quad \text{on } \{\tau\text{ odd}\}.
\end{equation}
Therefore,

\begin{align*}
    A_1&:=P_{even}\mathbb{E}[O_+\mid \tau\text{ even}]
       +P_{odd}\mathbb{E}[O_-\mid \tau\text{ odd}]
       \le \mathbb{E}[|T_\tau'|],\\
    A_2&:=P_{even}\mathbb{E}[O_+^2\mid \tau\text{ even}]
       +P_{odd}\mathbb{E}[O_-^2\mid \tau\text{ odd}]
       \le \mathbb{E}[|T_\tau'|^2].
\end{align*}

Moreover,

\begin{equation}
    |T_\tau'|\le \max_{1\le i\le \tau}|T_i'|
    \le \left(\sum_{i=1}^{\tau}|T_i'|^p\right)^{1/p}.
\end{equation}
Using Jensen's inequality and the fact that $\{\tau\ge i\}$ is determined before observing $T_i'$ and hence is independent of $T_i'$, and applying the assumption from the main text regarding the centred absolute moments convergence assumption (\cref{eq:moment_assumption}), 

\begin{equation}
    \mathbb{E}\!\left[\left|\frac{T_i'}{\sigma}\right|^p\right]\le C_p,
\end{equation}
we obtain
\begin{equation}
\label{eq:centered_random_sum}
    \mathbb{E}\!\left[\sum_{i=1}^{\tau}|T_i'|^p\right]
    =\sum_{i=1}^{\infty}\mathbb{E}\!\left[|T_i'|^p\mathbb{I}_{\{\tau\ge i\}}\right]
    \le C_p\sigma^p\mathbb{E}[\tau].
\end{equation}
Consequently,
\begin{equation}
\label{eq:over_final}
    A_1\le C_p^{1/p}\sigma\,\mathbb{E}[\tau]^{1/p},
    \qquad
    A_2\le C_p^{2/p}\sigma^2\,\mathbb{E}[\tau]^{2/p}.
\end{equation}

These estimates also imply that $\mathbb{E}[\tau]/R$ remains bounded uniformly for large $R$. Define $x:=\mathbb{E}[\tau]/R$. We want to show that $x$ is bounded above for any sufficiently large $R$. Indeed, using $\Delta\in[0,\mu]$, \cref{eq:wald_decomp} and \cref{eq:over_final} yield
\begin{equation}
    x\le 1+2C_p^{1/p}R^{-1/2+1/p}x^{1/p}
      +C_p^{2/p}R^{-1+2/p}x^{2/p}.
\end{equation}
Therefore, since $p>2$, the right-hand side grows sublinearly in $x$, and hence there must exist a constant $K_p>0$ independent of $R$ and $\delta$ (the inequality above holds uniformly for $\Delta\in[0,\mu]$) such that $x\le K_p$. Substituting this bound back into \cref{eq:over_final} gives
\begin{align}
\label{eq:overshootbounds}
    A_1 &\le \kappa_{1,p}\frac{\mu}{R^{1/2-1/p}},\\
    A_2 &\le \kappa_{2,p}\frac{\mu^2}{R^{1-2/p}}.
\end{align}

The coefficients $\kappa_{1,p}, \kappa_{2,p}$ arise from $C_p$ in \cref{eq:centered_random_sum} and from $K_p$ that has just been defined. To conclude the upper bound, substitute \cref{eq:overshootbounds} into \cref{eq:def_eps} as follows:

\begin{align}
\label{eq:bound_eps}
    \varepsilon
    &\le 2\mu A_1+A_2 \nonumber\\
    &\le \frac{2\kappa_{1,p}\mu^2}{R^{1/2-1/p}}
      +\frac{\kappa_{2,p}\mu^2}{R^{1-2/p}}.
\end{align}

From \cref{eq:wald_decomp}, we therefore obtain

\begin{align}
    &0\le \sigma^2\mathbb{E}[\tau] - (P_{even}\,\Delta^2 + P_{odd}(\Delta-\mu)^2) \\
    &\nonumber\le \frac{2\kappa_{1,p}\mu^2}{R^{1/2-1/p}}
      +\frac{\kappa_{2,p}\mu^2}{R^{1-2/p}}.
\end{align}

Dividing both sides by $\mu^2$ yields

\begin{align}
\label{eq:wald_upper_th4}
    &0 \le \frac{\mathbb{E}[\tau]}{R} - \left(P_{even}\,\delta^2 + P_{odd}(\delta-1)^2\right) \\
    &\nonumber \le \frac{2\kappa_{1,p}}{R^{1/2-1/p}} + \frac{\kappa_{2,p}}{R^{1-2/p}}.
\end{align}

\Cref{eq:wald_upper_th4} provides a bound on the relationship between $\mathbb{E}[\tau]$ and $R$ in terms of $P_{odd}$, which can be experimentally measured to arbitrary precision, in the case of a known $\delta$. Nevertheless, one could argue that in order to characterise $\delta$ perfectly one would need an ideal reference frame. However, a bound for $\delta$ that only depends on operationally measurable quantities can be obtained as follows. 

By the Optional Stopping Theorem applied to \cref{eq:app_B1},
we have
\begin{equation}
\label{eq:ost0}
    \mathbb{E}[S_\tau] = \mathbb{E}[S_0] = 0.
\end{equation}

Therefore, as shown in \cref{eq:wald1},

\begin{align}
\label{eq:app_B8}
    0=\mathbb{E}[S_\tau]
    &= P_{even}\,\mathbb{E}[S_\tau \mid \tau \text{ even}]\\
    &\quad + P_{odd}\,\mathbb{E}[S_\tau \mid \tau \text{ odd}],
\end{align}

which can again be expressed in terms of the overshoots as

\begin{align*}
0 &= P_{even} (\Delta+\mathbb{E}\left[O_+\mid \tau\text{ even}\right])\\
  &\quad+ P_{odd} (\Delta - \mu-\mathbb{E}\left[O_-\mid \tau\text{ odd}\right]).
\end{align*}

Equivalently,
\begin{equation}
    \delta-P_{odd}
    =\frac{P_{odd}\mathbb{E}[O_-\mid \tau\text{ odd}]
    -P_{even}\mathbb{E}[O_+\mid \tau\text{ even}]}{\mu}.
\end{equation}

Using \cref{eq:overshootbounds}, we obtain

\begin{equation}
\label{eq:last_th3}
    \left|\delta - P_{odd}\right| \le \frac{\kappa_p}{R^{1/2-1/p}}.
\end{equation}

Now that we have this relationship between $\delta$ and $P_{odd}$, we can use it to refine \cref{eq:wald_upper_th4}. First, note that
\begin{equation}
    P_{even}\delta^2+P_{odd}(\delta-1)^2
    =P_{even}P_{odd}+(\delta-P_{odd})^2.
\end{equation}
Combining this identity with \cref{eq:wald_upper_th4,eq:last_th3} gives
\begin{equation}
    0 \le \frac{\mathbb{E}[\tau]}{R}-P_{even}P_{odd}
    \le \frac{K_{1,p}}{R^{1/2-1/p}}+\frac{K_{2,p}}{R^{1-2/p}},
\end{equation}

Relabelling the coefficients recovers the main result from \Cref{th:4}.

\end{proof}

For completeness, recall the statement of \Cref{th:3}:

\theoremthree*

\begin{proof}
This corresponds exactly to \cref{eq:last_th3}, so the result is already included in the proof of \Cref{th:4}. Note that \Cref{th:3} also implies that 

\begin{equation}
    \left|\delta - P_{odd}\right| = O \left(R^{-1/2+1/p}\right).
\end{equation}

\end{proof}

We can now use it to prove \Cref{th:2}:

\theoremtwo*

\begin{proof}
By letting $R\to\infty$ in \Cref{th:3}, we get:

\begin{equation}
\label{eq:app_B13}
    \delta = P_{odd} + O\left(\frac{1}{R^{1/2-1/p}}\right).
\end{equation}

Since $P_{odd} + P_{even} = 1$, \cref{eq:app_B13} implies
$P_{odd}P_{even}=\delta(1-\delta)+O(R^{-(1/2-1/p)})$. Substituting this into \Cref{th:4} concludes the proof. \Cref{th:2} also implies that 

\begin{equation}
    \frac{\mathbb{E}[\tau]}{R} = \delta(1-\delta) + O \left(R^{-1/2+1/p}\right).
\end{equation}
\end{proof}

Lastly, let us prove \Cref{th:5}. This result generalises the first-moment statements above to arbitrary moments of $\tau$. Its proof is self-contained (in particular, it relies neither on Wald's identities nor on \Cref{th:4}) and is a discrete, quantitative version of the following classical martingale argument for Brownian motion. If $(W_t)_{t\ge0}$ is a standard Brownian motion started at $x\in[0,1]$ and $T$ denotes its first exit time from $[0,1]$, then the moments $u_j(x):=\mathbb{E}_x[T^j]$ solve exactly the Dirichlet hierarchy defining the polynomials $P_j$ in \Cref{th:5}, so that $u_j=P_j$. The classical proof applies the optional stopping theorem to the martingale $h(t\wedge T,W_{t\wedge T})$, where $h(t,x):=\mathbb{E}_x[(t+T)^m]$ solves the backward heat equation $\partial_t h+\tfrac12\partial_x^2h=0$ with boundary data $h(t,0)=h(t,1)=t^m$ and initial slice $h(0,x)=P_m(x)$, yielding $\mathbb{E}_x[T^m]=h(0,x)=P_m(x)$. Our ticking walk, once suitably normalised, has centred increments of variance exactly $1/R$; on time scales of order $R$ it mimics such a Brownian motion, $\tau/R$ mimics $T$, and $h(n/R,\cdot)$ evaluated along the walk is an approximate martingale with explicitly controlled drift. The proof below consists of the corresponding quantitative statements: an explicit construction of $h$ (\Cref{lem:heat_polynomial}), a one-step drift bound (\Cref{lem:drift}), an a priori integrability statement for $\tau$ that justifies all limiting operations (\Cref{lem:tau_moments}), and overshoot estimates quantifying the failure of the walk to stop exactly on the boundary (\Cref{lem:overshoot}).

Before we start the proof, we need to introduce some definitions. In the following, $K$ denotes a finite positive constant depending only on $m$, $p$ and $C_p$ (in particular, independent of $R$, $\delta$ and $n$), whose value may change from line to line.

We first normalise the walk so that it starts at $\delta$ and is stopped upon leaving the unit interval. Recall that $S_n=\sum_{i=1}^n(-1)^{i+1}T_i'$ and $\tau=\min\{n\ge1: \Delta < S_n\le\Delta-\mu\}$, and define
\begin{equation}
\label{eq:X_definition}
    X_n:=\delta-\frac{S_n}{\mu},
    \qquad
    \mathcal{F}_n:=\sigma(T_1',\ldots,T_n').
\end{equation}
Then
\begin{align}
\label{eq:tau_X}
X_0&=\delta\in(0,1),\nonumber\\
\tau&=\min\{n\ge1:\ X_n \notin (0,1] \},
\\X_n&\in[0,1)\text{ for } \{\tau>n\}.\nonumber
\end{align}
Since $T_{n+1}'$ is independent of $\mathcal{F}_n$, the increments $\eta_{n+1}:=X_{n+1}-X_n=(-1)^{n+1}T_{n+1}'/\mu$ satisfy
\begin{align}
\label{eq:increment_moments}
    \mathbb{E}[\eta_{n+1}\mid\mathcal{F}_n]&=0,
    \nonumber\\
    \mathbb{E}[\eta_{n+1}^2\mid\mathcal{F}_n]&=\frac{1}{R},
    \\
    \mathbb{E}[|\eta_{n+1}|^q\mid\mathcal{F}_n]&\le\frac{C_p^{q/p}}{R^{q/2}},
    \qquad 1\le q\le 2m.
    \nonumber
\end{align}
The first two identities are immediate from $\mathbb{E}[T_{n+1}']=0$ and $\mathbb{E}[T_{n+1}'^2]=\sigma^2=\mu^2/R$. The last bound is obtained as follows: since $q\le2m<p$ by the hypothesis of the theorem, the map $u\mapsto u^{q/p}$ is concave on $[0,\infty)$, so Jensen's inequality applies in the form $\mathbb{E}[Z^{q/p}]\le(\mathbb{E}[Z])^{q/p}$ for $Z\ge0$, giving

\begin{align}
    \mathbb{E}\!\left[\left|\frac{T_{n+1}'}{\sigma}\right|^{q}\right]
    & =\mathbb{E}\!\left[\left(\left|\frac{T_{n+1}'}{\sigma}\right|^{p}\right)^{q/p}\right] \le\left(\mathbb{E}\!\left[\left|\frac{T_{n+1}'}{\sigma}\right|^{p}\right]\right)^{q/p} \nonumber
    \\ & \le C_p^{q/p} \nonumber,
\end{align}

where the last step is \cref{eq:moment_assumption}. Multiplying by $(\sigma/\mu)^q=R^{-q/2}$ yields the third line of \cref{eq:increment_moments}.

\begin{lemma}[The caloric polynomial]
\label{lem:heat_polynomial}
Let $P_0,\ldots,P_m$ be as in \Cref{th:5} and define
\begin{equation}
\label{eq:h_definition}
    h(t,x):=\sum_{j=0}^{m}\binom{m}{j}\,t^{m-j}P_j(x),
\end{equation}
a real polynomial in $(t,x)$. Then, for all $t\ge 0$,

\begin{enumerate}
    \item[(i)] each $P_j$ exists, is unique, and is a polynomial of degree exactly $2j$, which, for $x \in \mathbb{R}$, is given recursively by
    \begin{align}
    \label{eq:P_recursion}
        P_j(x)={}&2j(1-x)\int_0^x u\,P_{j-1}(u)\,du
        \nonumber\\
        &+2jx\int_x^1(1-u)\,P_{j-1}(u)\,du;
    \end{align}
    \item[(ii)] $\partial_t h+\tfrac12\partial_x^2h=0$ and
    \begin{equation}
        h(t,0)=h(t,1)=t^m,
        \qquad
        h(0,x)=P_m(x)
    \end{equation}
    for all $x \in \mathbb{R}$;
    \item[(iii)] for all integers $k,r\ge0$, the function $\partial_t^k\partial_x^r h(t,x)$ is a polynomial in $t$ of degree at most $m-k-\lceil r/2\rceil$ (vanishing identically if this quantity is negative), whose coefficients are bounded by $K$ uniformly for $x\in[0,1]$;
    \item[(iv)] for every $1\le j\le m$, $v\ge0$ and $x\in[-v,0]\cup[1,1+v]$,
    \begin{equation}
        |P_j(x)|\le K\,\bigl(v+v^{2j}\bigr).
    \end{equation}
\end{enumerate}
\end{lemma}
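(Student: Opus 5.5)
The plan is to prove the four items in order, since each uses the previous ones.

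\textbf{Item (i).} I would argue by induction on $j$, using the Green's function of $-\tfrac12 d^2/dx^2$ on $[0,1]$ with Dirichlet boundary conditions, $G(x,u)=2\min(x,u)(1-\max(x,u))$. Existence follows by checking that \cref{eq:P_recursion} solves the problem:
\begin{itemize}
\item Setting $x=0$ or $x=1$ kills both terms, so $P_j(0)=P_j(1)=0$.
\item Differentiating twice, the boundary contributions cancel, leaving $P_j''(x)=-2j\,xP_{j-1}(x)-2j(1-x)P_{j-1}(x)=-2jP_{j-1}(x)$.
\item Since $P_{j-1}$ is a polynomial, the right-hand side of \cref{eq:P_recursion} is a polynomial expression valid for all real $x$, not only on $[0,1]$.
\end{itemize}
Uniqueness holds because the difference of two solutions is affine and vanishes at $0$ and $1$. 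For the degree, $P_j''=-2jP_{j-1}$ has degree exactly $2j-2$ by induction (the leading coefficient is multiplied by a nonzero factor), so $P_j$ has degree exactly $2j$.

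\textbf{Item (ii).} This is a direct computation with the index shift $j\to j+1$. First,
\[
\partial_t h=\sum_{j=0}^{m-1}\binom{m}{j}(m-j)t^{m-j-1}P_j .
\]
Second,
\[
\tfrac12\partial_x^2h=-\sum_{j=1}^{m}\binom{m}{j}\,j\,t^{m-j}P_{j-1}.
\]
The identity $\binom{m}{j+1}(j+1)=\binom{m}{j}(m-j)$ makes the two sums cancel term by term. The boundary values follow because $P_j(0)=P_j(1)=0$ for $j\ge1$ and $P_0=1$, which gives $h(t,0)=h(t,1)=t^m$. Setting $t=0$ leaves only the $j=m$ term, so $h(0,x)=P_m(x)$.

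\textbf{Item (iii).} Apply the operator termwise:
\[
\partial_t^k\partial_x^r h=\sum_j\binom{m}{j}\frac{(m-j)!}{(m-j-k)!}\,t^{m-j-k}P_j^{(r)}(x).
\]
Since $P_j$ has degree $2j$, we have $P_j^{(r)}\equiv0$ unless $2j\ge r$, that is, $j\ge\lceil r/2\rceil$. Hence the $t$-degree is at most $m-k-\lceil r/2\rceil$. The coefficients are finitely many fixed polynomials in $x$, depending only on $m$, so they are bounded on $[0,1]$ by a constant depending only on $m$. That constant can be absorbed into $K$.

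\textbf{Item (iv).} Take $x\in[1,1+v]$; the case $x\in[-v,0]$ is symmetric. Taylor-expand $P_j$ about $1$, using $P_j(1)=0$:
\[
P_j(x)=\sum_{r=1}^{2j}\frac{P_j^{(r)}(1)}{r!}(x-1)^r .
\]
Each $|P_j^{(r)}(1)|$ is bounded by a constant depending only on $m$, and $|x-1|\le v$. So
\[
|P_j(x)|\le K\sum_{r=1}^{2j}v^r\le K'(v+v^{2j}),
\]
where the last step uses $v^r\le v+v^{2j}$ for $1\le r\le 2j$ (split into $v\le1$ and $v\ge1$).

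\textbf{Main obstacle.} I expect the only delicate point to be the bookkeeping in (iii): pinning down the exact degree bound $m-k-\lceil r/2\rceil$ and making clear that all constants depend only on $m$. That dependence is automatic because the $P_j$ are fixed, universal polynomials that depend on nothing but $j$.
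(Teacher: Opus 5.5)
Your proposal is correct and follows the paper's proof essentially step for step: the same pieces appear in each item.
\begin{itemize}
\item In (i), you verify the recursion by its vanishing at the endpoints and by $P_j''=-2jP_{j-1}$, prove uniqueness via the affine difference, and count the degree through the leading coefficient.
\item In (ii), you use the binomial identity $\binom{m}{j}(m-j)=\binom{m}{j+1}(j+1)$.
\item In (iii), you restrict to the indices $\lceil r/2\rceil\le j\le m-k$.
\item In (iv), you Taylor-expand at the roots $0$ and $1$ and use $v^\ell\le v+v^{2j}$.
\end{itemize}
The Green's-function remark only motivates the explicit formula and does not change the argument.
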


\begin{proof}
(i) We induct on $j$, starting from $P_0\equiv1$. For $j\ge1$, the right-hand side of \cref{eq:P_recursion} vanishes at $x=0$ and $x=1$, its first derivative equals
\begin{equation}
    2j\left(-\int_0^x u\,P_{j-1}(u)\,du+\int_x^1(1-u)\,P_{j-1}(u)\,du\right),
\end{equation}
and hence its second derivative equals $-2jP_{j-1}(x)$; it therefore solves the boundary value problem of \Cref{th:5}. It is the unique solution: the difference of two solutions has vanishing second derivative, hence is affine, and vanishes at $0$ and $1$, hence is identically zero. Finally, if $P_{j-1}$ has degree $2(j-1)$ with leading coefficient $c_{j-1}\neq0$, then $P_j''=-2jP_{j-1}$ forces $P_j$ to have degree $2j$ with leading coefficient $c_j=-c_{j-1}/(2j-1)\neq0$.

(ii) Since $P_j(0)=P_j(1)=0$ for $j\ge1$ and $P_0\equiv1$, only the $j=0$ term of \cref{eq:h_definition} survives at $x\in\{0,1\}$, giving $h(t,0)=h(t,1)=t^m$; at $t=0$ only the $j=m$ term survives, giving $h(0,x)=P_m(x)$. Moreover,
\begin{align}
    \partial_t h(t,x) \label{eq:cancel1}
    &=\sum_{j=0}^{m-1}\binom{m}{j}(m-j)\,t^{m-j-1}P_j(x),
    \\
    \frac12\partial_x^2h(t,x) \label{eq:cancel2}
    &=-\sum_{j=1}^{m}\binom{m}{j}\,j\,t^{m-j}P_{j-1}(x)
    \\
    \nonumber&=-\sum_{j=0}^{m-1}\binom{m}{j+1}(j+1)\,t^{m-j-1}P_j(x),
\end{align}
where we used $\tfrac12P_j''=-jP_{j-1}$ and then relabelled $j-1$ as $j$. The identity $\binom{m}{j}(m-j)=\binom{m}{j+1}(j+1)$ shows that \cref{eq:cancel1,eq:cancel2} cancel term by term.

(iii) Let us denote the $r$-th derivative of $P_j$ by $P_j^{(r)}$. By (i), $P_j^{(r)}\equiv0$ whenever $r>2j$, so in $\partial_t^k\partial_x^rh$ only the indices $\lceil r/2\rceil\le j\le m-k$ contribute, with powers $t^{m-j-k}$ of degree at most $m-k-\lceil r/2\rceil$. The coefficients involve only binomial factors and the derivatives $P_j^{(r)}$, which are fixed polynomials once $m$ is fixed, evaluated on the compact set $[0,1]$; they are therefore bounded by $K$.

(iv) Let $x=1+u$ with $0\le u\le v$. Since $P_j(1)=0$ and $\deg P_j=2j$, Taylor's formula at the root $1$ gives $P_j(1+u)=\sum_{\ell=1}^{2j}P_j^{(\ell)}(1)\,u^\ell/\ell!$, whence
\begin{equation}
    |P_j(1+u)|\le K\sum_{\ell=1}^{2j}v^\ell\le K\,\bigl(v+v^{2j}\bigr),
\end{equation}
because $v^\ell\le v+v^{2j}$ for every $1\le\ell\le2j$ (consider $v\le1$ and $v>1$ separately). The case $x\in[-v,0]$ is identical, using the root at $0$.
\end{proof}

\begin{lemma}[Drift bound]
\label{lem:drift}
Define
\begin{equation}
\label{eq:M_definition}
    M_n:=h\!\left(\frac{n}{R},X_n\right),
    \qquad n\ge0,
\end{equation}
so that $M_0=h(0,\delta)=P_m(\delta)$. Then, for every $n\ge0$ and every $R\ge1$, on the event $\{\tau>n\}$,
\begin{equation}
\label{eq:drift_bound}
    \bigl|\mathbb{E}[M_{n+1}-M_n\mid\mathcal{F}_n]\bigr|
    \le\frac{K}{R^{3/2}}
    \left(1+\left(\frac{n}{R}\right)^{m-1}\right).
\end{equation}
\end{lemma}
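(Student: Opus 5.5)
The plan is a second-order Taylor expansion of $h$ around $(n/R,X_n)$, combined with the caloric identity of \Cref{lem:heat_polynomial}(ii), which kills the leading terms. Because $h$ is a polynomial in $(t,x)$, the expansion is exact: the Taylor series terminates. Writing $t=n/R$, $\Delta t=1/R$ and $\eta=\eta_{n+1}$, we have
\[
M_{n+1}-M_n=\sum_{(k,r)\neq(0,0)}\frac{1}{k!\,r!}\,\partial_t^k\partial_x^r h(t,X_n)\,R^{-k}\eta^r ,
\]
where the sum runs over finitely many pairs.

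First take conditional expectations given $\mathcal{F}_n$. On $\{\tau>n\}$ we have $X_n\in[0,1)$ by \cref{eq:tau_X}, and $X_n$, $t$ are $\mathcal{F}_n$-measurable, so each derivative can be pulled out of the expectation. By \cref{eq:increment_moments}:
\begin{itemize}
\item the term $(k,r)=(0,1)$ vanishes, since $\mathbb{E}[\eta\mid\mathcal{F}_n]=0$;
\item the term $(0,2)$ contributes $\tfrac12\partial_x^2h\cdot R^{-1}$;
\item the term $(1,0)$ contributes $\partial_t h\cdot R^{-1}$.
\end{itemize}
By \Cref{lem:heat_polynomial}(ii), these last two contributions cancel exactly. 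What remains are the terms with $k\ge1,r\ge1$ (where $\eta^r$ contributes conditional mean zero if $r=1$), with $k\ge2$, or with $k=0,r\ge3$. Each such term is bounded by $|\partial_t^k\partial_x^r h(t,X_n)|\,R^{-k}\,\mathbb{E}[|\eta|^r\mid\mathcal{F}_n]$. Using the moment bound $\mathbb{E}[|\eta|^r\mid\mathcal{F}_n]\le K R^{-r/2}$, this is at most $K\,R^{-k-r/2}$ times the derivative.

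Every surviving term has $k+r/2\ge3/2$. The cases are $(0,3)$, $(1,1)$ (which vanishes, but gives $3/2$ anyway), $(2,0)$ giving $2$, and the higher ones. Since $R\ge1$, each is bounded by $R^{-3/2}$ times the derivative factor.

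For the derivative factor, \Cref{lem:heat_polynomial}(iii) gives, for $x\in[0,1]$, a polynomial in $t$ of degree at most $m-k-\lceil r/2\rceil\le m-1$ for all surviving terms, with coefficients bounded by $K$. Using $t^j\le 1+t^{m-1}$ for $0\le j\le m-1$ and $t\ge0$, this factor is at most $K(1+(n/R)^{m-1})$. Summing finitely many terms gives \cref{eq:drift_bound}.

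The main obstacle is the moment requirement on $\eta$. The highest power of $\eta$ appearing in the expansion is $r=\deg_x h=2m$, and \cref{eq:increment_moments} covers exactly $q\le 2m$. So the hypothesis $p>2m$ of \Cref{th:5} is exactly what is needed, with no truncation argument. Care is also needed that the $r=1$ terms with $k\ge1$ vanish by conditional centring rather than merely being bounded. Finally, all derivatives must be evaluated at $X_n\in[0,1]$, not at an intermediate point; this is possible because the expansion is exact, so no remainder-term point escaping $[0,1]$ arises.
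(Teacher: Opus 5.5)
Your proposal is correct and follows essentially the same route as the paper. Both expand the polynomial $h$ exactly about $(n/R,X_n)$, kill the $(0,1)$ term by centring and the $(1,0)$ and $(0,2)$ terms by the caloric identity, and bound every remaining term by $K R^{-k-r/2}(1+(n/R)^{m-1})$ using \Cref{lem:heat_polynomial}(iii) and the increment moment bounds, which need $p>2m$. Your remark that the $(k,1)$ terms with $k\ge1$ must vanish by centring is harmless but unnecessary, since those terms already have $k+r/2\ge 3/2$ and would be controlled by the same bound.
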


\begin{proof}
Since $h$ is a polynomial, Taylor's formula in the two variables is a finite, exact identity:
\begin{equation}
    M_{n+1}
    =\sum_{k,r\ge0}\frac{R^{-k}\,\eta_{n+1}^{\,r}}{k!\,r!}\,
    \partial_t^k\partial_x^r h\!\left(\frac{n}{R},X_n\right),
\end{equation}
with nonzero terms only for $k\le m$ and $r\le2m$, by \Cref{lem:heat_polynomial}(iii). The derivatives are $\mathcal{F}_n$-measurable, and on $\{\tau>n\}$ each term is integrable, because there $X_n\in[0,1)$ while $\eta_{n+1}$ has conditional moments up to order $2m$ by \cref{eq:increment_moments}. We take $\mathbb{E}[\,\cdot\mid\mathcal{F}_n]$ term by term. The $(k,r)=(0,0)$ term is $M_n$. Every term with $r=1$ vanishes because the increments are centred. The two terms $(k,r)=(1,0)$ and $(0,2)$ jointly contribute
\begin{equation}
    \frac{1}{R}\left(\partial_t h+\frac12\,\partial_x^2h\right)
    \!\left(\frac{n}{R},X_n\right)=0
\end{equation}
by \Cref{lem:heat_polynomial}(ii), where we used that $\mathbb{E}[\eta_{n+1}^2\mid\mathcal{F}_n]=1/R$ holds exactly. Every remaining term has $k\ge2,r=0$, or $k\ge1,r\ge2$, or $k=0,r\ge3$; in all three cases $k+r/2\ge3/2$ and $m-k-\lceil r/2\rceil\le m-1$. Hence, by \Cref{lem:heat_polynomial}(iii) (applicable since $X_n\in[0,1)$) and \cref{eq:increment_moments}, the conditional expectation of each such term is bounded in absolute value by
\begin{equation}
    K\,R^{-k-r/2}\left(1+\left(\frac{n}{R}\right)^{m-1}\right)
    \le \frac{K}{R^{3/2}}\left(1+\left(\frac{n}{R}\right)^{m-1}\right)
\end{equation}
for $R\ge1$, where we used $t^a\le1+t^{m-1}$ for all $t\ge0$ and $0\le a\le m-1$. Summing over the boundedly many pairs $(k,r)$ proves \cref{eq:drift_bound}.
\end{proof}

To provide some context to the reader, \Cref{lem:drift}, applied to our problem, allows us to understand $M_n$ as a stochastic process that behaves almost like a martingale: while in the case of a martingale $X_n$ one would have, by definition, $\mathbb{E}[X_{n+1}|\mathcal{F}_n] - \mathbb{E}[X_{n}] = 0$, in our case this quantity is bounded above by a term that becomes arbitrarily small for sufficiently precise clocks (sufficiently large $R$).

\begin{lemma}[A priori integrability]
\label{lem:tau_moments}
$\tau<\infty$ almost surely, and $\mathbb{E}[\tau^s]<\infty$ for every $s>0$.
\end{lemma}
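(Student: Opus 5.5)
The plan is to show that $\tau$ has a geometric tail, $P[\tau>n]\le C\rho^{n}$ for some $\rho<1$. Every moment $\mathbb{E}[\tau^s]=\sum_n\bigl((n+1)^s-n^s\bigr)P[\tau>n]$ is then finite, and $\tau<\infty$ almost surely follows at once. Here $R$ (and $\delta$) are held fixed, since the lemma is only qualitative and the constants may depend on $R$. The mechanism is the standard one for exits of a random walk from a bounded interval: from any starting point in $[0,1)$, a block of finitely many consecutive increments pushes $X_n$ out of $(0,1]$ with probability bounded below uniformly in the starting point. Iterating this over disjoint blocks with the Markov property gives the geometric tail.

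For the block estimate, work with the paired increments $\xi_k:=\eta_{2k-1}+\eta_{2k}$. These are i.i.d., since each is, up to sign and the factor $1/\mu$, a difference $T_{2k}'-T_{2k-1}'$ of two independent copies of the centred tick time. They are centred with variance $2/R>0$, so $P[\xi_1>0]>0$. Choose $\varepsilon>0$ with $q:=P[\xi_1>\varepsilon]>0$ and set $L:=\lceil 1/\varepsilon\rceil+1$. On the event that $\xi_k>\varepsilon$ for all $L$ pairs in a block of $2L$ steps, the walk evaluated at the end of each pair moves up by more than $\varepsilon$ each time. Starting anywhere in $[0,1)$, it therefore exceeds $1$ by the end of the block unless it has already exited. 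This block event has probability $q^L>0$ and is independent of $\mathcal{F}_{2jL}$ for the block starting at time $2jL$.

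Hence $P[\tau>2(j+1)L\mid\mathcal{F}_{2jL}]\le(1-q^L)\mathbb{I}_{\{\tau>2jL\}}$. Taking expectations and inducing on $j$ gives $P[\tau>2jL]\le(1-q^L)^j$, which is the required geometric bound with $\rho=(1-q^L)^{1/(2L)}$. Monotonicity of $n\mapsto P[\tau>n]$ fills in the times between block boundaries.

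The main obstacle is that exit can only be certified at the ends of pairs. Within a pair, the walk could leave $(0,1]$ at the odd step and then come back. This does not break the argument, because $\tau$ is the first exit time: if the walk left earlier, then $\tau$ is already at most the end of the block. One must also keep the parity in view, since odd and even steps carry opposite signs. Grouping them into pairs (odd, even) removes this issue, because every pair contributes the same law. Finally, if one prefers to avoid pairing, the same argument works with single steps, using $P[\eta>\varepsilon]>0$ for even-indexed steps together with a bound $P[|\eta|\le\varepsilon']$ on the odd ones. The pairing route is cleaner, though, so I would commit to it.
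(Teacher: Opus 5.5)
Your proof is correct and is essentially the paper's own argument. Both group consecutive steps into the i.i.d., centred, non-degenerate two-step displacements $D_i=(T_{2i}'-T_{2i-1}')/\mu$, choose $a>0$ with $q=\mathbb{P}[D_1>a]>0$, and iterate blocks of about $1/a$ pairs, each independent of the past, to obtain the geometric tail $\mathbb{P}[\tau>2ik]\le(1-q^k)^i$. Your closing aside about a single-step version is unnecessary and not quite right as phrased: for a two-valued $T_i'$, $\mathbb{P}[|\eta|\le\varepsilon']$ vanishes for every $\varepsilon'$ small enough to be useful, so you were right to commit to the pairing route.
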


This statement is purely qualitative: the constants implicit in its proof may depend on the ticking distribution, and hence on $R$. This is harmless, since the lemma is used only to justify limiting operations, while all quantitative bounds below are uniform.

\begin{proof}
The two-step displacements
\begin{equation}
    D_i:=X_{2i}-X_{2i-2}=\frac{T_{2i}'-T_{2i-1}'}{\mu},
    \qquad i\ge1,
\end{equation}
are independent and identically distributed, centred, and non-degenerate, since $\operatorname{Var}(D_1)=2/R>0$. Consequently $\mathbb{P}[D_1>0]>0$ (otherwise $D_1\le0$ almost surely together with $\mathbb{E}[D_1]=0$ would force $D_1=0$ almost surely), and by continuity from below there exists $a>0$ with $q:=\mathbb{P}[D_1>a]>0$. Let $k:=\lceil1/a\rceil$, and fix $i\ge0$. On the event $\{\tau>2ik\}$ one has $X_{2ik}\in[0,1)$; if moreover $D_l>a$ for every $l\in\{ik+1,\ldots,(i+1)k\}$ (an event of probability $q^k$, independent of $\mathcal{F}_{2ik}$), then
\begin{equation}
    X_{2(i+1)k}=X_{2ik}+\sum_{l=ik+1}^{(i+1)k}D_l> ka\ge1,
\end{equation}
so that $\tau\le2(i+1)k$. Since $\{\tau>2ik\}\in\mathcal{F}_{2ik}$, it follows that
\begin{equation}
    \mathbb{P}[\tau>2(i+1)k]\le\bigl(1-q^k\bigr)\,\mathbb{P}[\tau>2ik],
\end{equation}
and inductively $\mathbb{P}[\tau>2ik]\le(1-q^k)^{i}$ for all $i\ge0$. The rate of decay of the tail of $\tau$ implies both assertions.
\end{proof}

\begin{lemma}[Overshoot bounds]
\label{lem:overshoot}
Set $V:=|T_\tau'|/\mu$, which is well defined by \Cref{lem:tau_moments}. Then
\begin{equation}
\label{eq:exit_position}
    X_\tau\in[-V,0)\cup[1,1+V],
\end{equation}
\nonumber
and for every real $1\le q\le 2m$,
\begin{equation}
\nonumber
\label{eq:overshoot_moments}
    \mathbb{E}[V^q]\le\frac{K}{R^{q/2}}\,\mathbb{E}[\tau]^{q/p}.
\end{equation}
\end{lemma}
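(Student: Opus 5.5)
The proof has two parts: a deterministic statement about where the walk can exit, and a moment bound obtained by dominating the last increment by all increments up to $\tau$.

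\emph{Exit position.} By \Cref{lem:tau_moments}, $\tau<\infty$ almost surely, so $T_\tau'$ and hence $V$ are well defined. At time $\tau-1$ the walk has not yet stopped. Either $\tau-1=0$, with $X_0=\delta\in(0,1)$, or $\tau-1\ge1$ and \cref{eq:tau_X} gives $X_{\tau-1}\in[0,1)$. In both cases $X_{\tau-1}\in[0,1)$. The last step is $X_\tau=X_{\tau-1}+\eta_\tau$ with $|\eta_\tau|=|T_\tau'|/\mu=V$, and $X_\tau\notin(0,1]$.
\begin{itemize}
\item If $X_\tau\le 0$, then $X_\tau\ge X_{\tau-1}-V\ge -V$.
\item If $X_\tau>1$, then $X_\tau\le X_{\tau-1}+V<1+V$.
\end{itemize}
This gives \cref{eq:exit_position}, modulo the open/closed endpoint convention in the stopping rule. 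Those boundary cases only affect which half-interval contains the single points $0$ and $1$; since $P_j(0)=P_j(1)=0$, they are harmless for the later use in \Cref{lem:heat_polynomial}(iv).

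\emph{Moment bound.} I will follow the argument already used in the proof of \Cref{th:4}. The starting inequality is
\begin{equation}
V^q\le \max_{1\le i\le\tau}\frac{|T_i'|^q}{\mu^q}\le \frac{1}{\mu^q}\left(\sum_{i=1}^{\tau}|T_i'|^p\right)^{q/p}.
\end{equation}
Because $q\le 2m<p$, the map $u\mapsto u^{q/p}$ is concave, so Jensen gives
\begin{equation}
\mathbb{E}[V^q]\le \mu^{-q}\left(\mathbb{E}\left[\sum_{i=1}^{\tau}|T_i'|^p\right]\right)^{q/p}.
\end{equation}
The inner expectation is controlled as in \cref{eq:centered_random_sum}. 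Write the sum as $\sum_i \mathbb{E}[|T_i'|^p\mathbb{I}_{\{\tau\ge i\}}]$. The event $\{\tau\ge i\}=\{\tau>i-1\}$ lies in $\mathcal{F}_{i-1}$ and is therefore independent of $T_i'$. Tonelli justifies the interchange of sum and expectation for nonnegative terms, and \Cref{lem:tau_moments} makes the result finite. Hence
\begin{equation}
\mathbb{E}\left[\sum_{i=1}^{\tau}|T_i'|^p\right]\le C_p\sigma^p\,\mathbb{E}[\tau].
\end{equation}
Combining the two displays,
\begin{equation}
\mathbb{E}[V^q]\le C_p^{q/p}\left(\frac{\sigma}{\mu}\right)^q\mathbb{E}[\tau]^{q/p}=C_p^{q/p}R^{-q/2}\,\mathbb{E}[\tau]^{q/p},
\end{equation}
which is the claimed bound with $K=\max(1,C_p^{2m/p})$.

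\emph{Main obstacle.} The only delicate points are the measurability and independence claim for $\{\tau\ge i\}$ and the finiteness of $\mathbb{E}[\tau]$, which make the Wald-type sum legitimate. Both are covered by the stopping-time structure and by \Cref{lem:tau_moments}. Everything else is direct.
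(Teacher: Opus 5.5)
Your proof is correct and follows the paper's argument essentially step for step: the exit position comes from the same one-step comparison using $X_{\tau-1}\in[0,1)$. The moment bound uses the same chain $|T_\tau'|^p\le\sum_{i\le\tau}|T_i'|^p$, then Jensen for the concave map $u\mapsto u^{q/p}$, Tonelli, and the independence of $\{\tau\ge i\}\in\mathcal{F}_{i-1}$ from $T_i'$.

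Your endpoint caveat is not needed. The stopping rule as originally defined ($S_n\le\Delta-\mu$ or $S_n>\Delta$) is exactly $X_n\notin[0,1)$; the $(0,1]$ in the normalised restatement of $\tau$ is a slip. So your two cases are really $X_\tau<0$ and $X_\tau\ge 1$, and these give $X_\tau\in[-V,0)\cup[1,1+V)$ directly.
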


\begin{proof}
Since $X_{\tau-1}\in[0,1)$ (for $\tau=1$ because $X_0=\delta$) and $|X_\tau-X_{\tau-1}|=V$, we have: if $X_\tau\ge1$, then
\begin{equation}
\nonumber
0\le X_\tau-1\le X_\tau-X_{\tau-1}\le V;
\end{equation}
 and if $X_\tau<0$, then 
 \begin{equation}
 \nonumber
 0<-X_\tau\le X_{\tau-1}-X_\tau\le V.
 \end{equation}
This proves \cref{eq:exit_position}. Next, $|T_\tau'|^p\le\sum_{i=1}^{\tau}|T_i'|^p$, so that

\begin{equation}
    V^q=\left(\frac{|T_\tau'|^p}{\mu^p}\right)^{q/p}
    \le\left(\frac{1}{\mu^p}\sum_{i=1}^{\tau}|T_i'|^p\right)^{q/p}.
\end{equation}
Write $Z:=\mu^{-p}\sum_{i=1}^{\tau}|T_i'|^p\ge0$, so that the previous display reads $V^q\le Z^{q/p}$. Since $q\le2m<p$, the map $u\mapsto u^{q/p}$ is concave on $[0,\infty)$, so Jensen's inequality applies in the form $\mathbb{E}[Z^{q/p}]\le(\mathbb{E}[Z])^{q/p}$. Together with Tonelli's theorem, that allows us to interchange the sum with the expected value giving rise to $\mathbb{E}[Z]=\mu^{-p}\sum_{i=1}^{\infty}\mathbb{E}[|T_i'|^p\,\mathbb{I}_{\{\tau\ge i\}}]$, this yields

\begin{equation}
    \mathbb{E}[V^q]
    \le\left(\frac{1}{\mu^p}\sum_{i=1}^{\infty}
    \mathbb{E}\bigl[|T_i'|^p\,\mathbb{I}_{\{\tau\ge i\}}\bigr]\right)^{q/p}.
\end{equation}
Since $\{\tau\ge i\}=\{\tau>i-1\}\in\mathcal{F}_{i-1}$ is independent of $T_i'$, \cref{eq:moment_assumption} bounds each summand by $C_p(\sigma/\mu)^p\,\mathbb{P}[\tau\ge i]$, and $\sum_{i\ge1}\mathbb{P}[\tau\ge i]=\mathbb{E}[\tau]$. As $\sigma/\mu=R^{-1/2}$, \cref{eq:overshoot_moments} follows.
\end{proof}

Let us restate \Cref{th:5} before proving it.

\theoremfive*

\begin{proof}[Proof of \Cref{th:5}]
Assume $R\ge1$ throughout, and write
\begin{equation}
    y:=\mathbb{E}\!\left[\left(\frac{\tau}{R}\right)^m\right],
\end{equation}
which is finite by \Cref{lem:tau_moments}. Let us first outline the argument. The process $M_n$ of \Cref{lem:drift} connects the two quantities that the theorem compares. On the one hand, $M_0=P_m(\delta)$ and $M_n$ is almost a martingale by \Cref{lem:drift}, so $\mathbb{E}[M_\tau]$ should be close to $P_m(\delta)$. On the other hand, $h(t,x)=t^m$ whenever $x$ lies exactly on the boundary $\{0,1\}$, and $X_\tau$ misses the boundary only by an overshoot controlled by $V$, so $M_\tau$ should be close to $(\tau/R)^m$. We prove quantitative versions of these two approximations, namely \cref{eq:terminal_error,eq:optional_stopping} below. Both error bounds involve the unknown moment $y$ itself; the third and final estimate, \cref{eq:uniform_moment}, resolves this circularity by showing that $y$ is bounded uniformly in $R$ and $\delta$. The theorem then follows by combining the three estimates.

We begin with the comparison of $M_\tau$ and $(\tau/R)^m$. The goal of this part of the proof is the bound
\begin{equation}
\label{eq:terminal_error}
    \mathbb{E}\!\left[\left|M_\tau-\left(\frac{\tau}{R}\right)^m\right|\right]
    \le\frac{K}{R^{\frac12-\frac1p}}\,\bigl(1+y^{\theta}\bigr),
\end{equation}
where
\begin{equation}
\label{eq:theta_def}
    \theta:=1-\frac{1}{m}\left(1-\frac{2}{p}\right)<1;
\end{equation}
the strict inequality $\theta<1$ holds precisely because $p>2$, and will be essential at the end of the proof. To prove \cref{eq:terminal_error}, note first that the $j=0$ term of \cref{eq:h_definition} is $t^m$, so
\begin{equation}
    M_\tau-\left(\frac{\tau}{R}\right)^m
    =\sum_{j=1}^{m}\binom{m}{j}
    \left(\frac{\tau}{R}\right)^{m-j}P_j(X_\tau).
\end{equation}
By \cref{eq:exit_position}, the exit position $X_\tau$ lies within distance $V$ of the boundary points $\{0,1\}$, so \Cref{lem:heat_polynomial}(iv), applied with $v=V$, bounds each $|P_j(X_\tau)|$ and gives the pointwise estimate
\begin{equation}
\label{eq:terminal_comparison}
    \left|M_\tau-\left(\frac{\tau}{R}\right)^m\right|
    \le K\sum_{j=1}^{m}
    \left(\frac{\tau}{R}\right)^{m-j}
    \bigl(V+V^{2j}\bigr).
\end{equation}
We now take expectations in \cref{eq:terminal_comparison} term by term. Fix $1\le j\le m$ and $b\in\{1,2j\}$. For $j<m$, H\"older's inequality with the conjugate exponents $\frac{m}{m-j}$ and $\frac{m}{j}$, in the form
\begin{equation}
    \mathbb{E}[AB]\le
    \mathbb{E}\bigl[A^{\frac{m}{m-j}}\bigr]^{\frac{m-j}{m}}\,
    \mathbb{E}\bigl[B^{\frac{m}{j}}\bigr]^{\frac{j}{m}}
    \qquad (A,B\ge0),
\end{equation}
applied to $A=(\tau/R)^{m-j}$ and $B=V^{b}$ yields
\begin{equation}
\label{eq:holder_term}
    \mathbb{E}\!\left[\left(\frac{\tau}{R}\right)^{m-j}V^b\right]
    \le y^{\frac{m-j}{m}}\;
    \mathbb{E}\bigl[V^{bm/j}\bigr]^{\frac{j}{m}};
\end{equation}
for $j=m$ the two sides of \cref{eq:holder_term} coincide, so it holds trivially. The exponent $bm/j$ lies in $[1,2m]$ because $1\le b\le2j$, so the overshoot bound \cref{eq:overshoot_moments} applies to the right-hand side of \cref{eq:holder_term}. That bound still involves $\mathbb{E}[\tau]$, which we remove in favour of $y$: Jensen's inequality for the convex map $u\mapsto u^m$, in the form $\mathbb{E}[\tau]^m\le\mathbb{E}[\tau^m]$, gives
\begin{equation}
    \mathbb{E}[\tau]\le\mathbb{E}[\tau^m]^{1/m}=R\,y^{1/m}.
\end{equation}
Substituting this into \cref{eq:overshoot_moments} with $q=bm/j$ and raising the result to the power $j/m$, we obtain
\begin{equation}
    \mathbb{E}\bigl[V^{bm/j}\bigr]^{\frac{j}{m}}
    \le K\,R^{-b\left(\frac12-\frac1p\right)}\;y^{\frac{b}{mp}}.
\end{equation}
It remains to collect the powers of $R$ and $y$. Since $b\ge1$ and $R\ge1$, the factor of $R$ is at most $R^{-(1/2-1/p)}$. The total exponent of the $y$ terms after substituting this in \cref{eq:holder_term} is
\begin{equation}
    \frac{m-j}{m}+\frac{b}{mp}
    \le 1-\frac{j}{m}\left(1-\frac{2}{p}\right)
    \le\theta,
\end{equation}
where the first inequality uses $b\le2j$, the second uses $j\ge1$ and $\theta$ is defined in \cref{eq:theta_def}. Using $y^{e}\le1+y^{\theta}$ for $0\le e\le\theta$ and summing the finitely many terms of \cref{eq:terminal_comparison} proves \cref{eq:terminal_error}. Since $y<\infty$, a byproduct of \cref{eq:terminal_error} is that $M_\tau$ is integrable, a fact we will need shortly.

We next quantify how close $M_n$ is to being a martingale when run until time $\tau$. The goal of this part of the proof to prove the bound
\begin{equation}
\label{eq:optional_stopping}
    \bigl|\mathbb{E}[M_\tau]-P_m(\delta)\bigr|
    \le\frac{K}{\sqrt{R}}\,(1+y).
\end{equation}
This is an optional stopping statement; since $\tau$ is unbounded, we first establish the estimate for the bounded stopping times $\tau\wedge L$ and then remove the cutoff. Fix $L\in\mathbb{N}$. Since $\{\tau>n\}\in\mathcal{F}_n$ and $M_0=P_m(\delta)$,
\begin{equation}
    M_{\tau\wedge L}-P_m(\delta)
    =\sum_{n=0}^{L-1}\mathbb{I}_{\{\tau>n\}}\,(M_{n+1}-M_n),
\end{equation}
where each summand is integrable, as noted in the proof of \Cref{lem:drift}. Taking expectations, conditioning the $n$th summand on $\mathcal{F}_n$, and applying the drift bound \cref{eq:drift_bound},
\begin{align}
\label{eq:optional_stopping_N}
    \bigl|\mathbb{E}[M_{\tau\wedge L}]-P_m(\delta)\bigr|
    &\le\frac{K}{R^{3/2}}\,
    \mathbb{E}\!\left[\sum_{n=0}^{\tau-1}
    \left(1+\left(\frac{n}{R}\right)^{m-1}\right)\right]
    \nonumber\\
    &\le\frac{K}{\sqrt{R}}\,(1+y),
\end{align}
uniformly in $L$, where the last step uses
\begin{equation}
\nonumber
    \sum_{n=0}^{\tau-1}\left(1+\left(\frac{n}{R}\right)^{m-1}\right)
    \le \tau+\tau\left(\frac{\tau}{R}\right)^{m-1}
    \le 2R\left(1+\left(\frac{\tau}{R}\right)^{m}\right).
\end{equation}

To deduce \cref{eq:optional_stopping} from \cref{eq:optional_stopping_N}, we must find a way to replace $\mathbb{E}[M_{\tau\wedge L}]$ by $\mathbb{E}[M_\tau]$. For this, one has to take the limit $L\to\infty$ on the left hand side of~\cref{eq:optional_stopping_N}, followed by using the dominated convergence theorem to exchange $\lim_{L\to\infty}\mathbb{E}[M_{\tau\wedge L}]$ for $\mathbb{E}[\lim_{L\to\infty}M_{\tau\wedge L}]$. We now prove the necessary prerequisites for the dominated convergence theorem to hold in our case. To start with, pointwise convergence holds because $\tau<\infty$ almost surely (\Cref{lem:tau_moments}): for almost every outcome, $\tau\wedge L=\tau$, and hence $M_{\tau\wedge L}=M_\tau$, as soon as $L\ge\tau$. Thus, all that remains to prove is the existence of a dominating integrable function for the set of functions $\{|M_{\tau\wedge L}|\}_L$.

To show a dominating function, we distinguish the two possible values of $\tau\wedge L$. On $\{\tau\le L\}$ one has $\tau\wedge L=\tau$, so that $M_{\tau\wedge L}=M_\tau$ and therefore $|M_{\tau\wedge L}|\le|M_\tau|$. On $\{\tau>L\}$ one has $\tau\wedge L=L$, so that $M_{\tau\wedge L}=M_L=h(L/R,X_L)$; moreover, on this event the walk has not exited by time $L$, so $X_L\in[0,1)$. We may therefore bound $h(L/R,X_L)$ through the behaviour of $h$ on $[0,\infty)\times[0,1]$: by \Cref{lem:heat_polynomial} (iii) with $k=r=0$, for each fixed $x\in[0,1]$ the map $t\mapsto h(t,x)$ is a polynomial of degree at most $m$ whose coefficients are bounded by $K$ uniformly in $x$, whence

\begin{equation}
    \sup_{x\in[0,1]}|h(t,x)|\le K\,(1+t^m),
    \qquad t\ge0 .
\end{equation}
Using this together with $L<\tau$ on the event $\{\tau>L\}$,
\begin{equation}
    |M_L|
    \le K\left(1+\left(\frac{L}{R}\right)^m\right)
    \le K\left(1+\left(\frac{\tau}{R}\right)^m\right).
\end{equation}
Combining the two cases, for every $L$,
\begin{equation}
    |M_{\tau\wedge L}|\le K\left(1+\left(\frac{\tau}{R}\right)^m\right)+|M_\tau| .
\end{equation}
This dominating function on the right hand side is integrable: the first summand because $y<\infty$, and the second because $M_\tau$ was shown to be integrable after \cref{eq:terminal_error}. Dominated convergence now gives $\mathbb{E}[M_{\tau\wedge L}]\to\mathbb{E}[M_\tau]$ as $L\to\infty$, and \cref{eq:optional_stopping} follows from \cref{eq:optional_stopping_N}.

The right-hand sides of \cref{eq:terminal_error,eq:optional_stopping} still involve the unknown moment $y$. The purpose of the third and final estimate is to show that these two bounds, taken together, force
\begin{equation}
\nonumber
\label{eq:uniform_moment}
    y\le K
    \qquad\text{for all } R\ge R_0,
\end{equation}
for some $R_0\ge1$ depending only on $m$, $p$ and $C_p$. Since $(\tau/R)^m\ge0$,
\begin{equation}
    y
    =\mathbb{E}\!\left[\left(\frac{\tau}{R}\right)^m\right]
    \le\mathbb{E}[M_\tau]
    +\mathbb{E}\!\left[\left|M_\tau-\left(\frac{\tau}{R}\right)^m\right|\right].
\end{equation}
We bound the first term by \cref{eq:optional_stopping} together with $|P_m(\delta)|\le K$ (a consequence of \Cref{lem:heat_polynomial}(iii) with $k=r=0$, evaluated at $t=0$), and the second term by \cref{eq:terminal_error}; using also $R^{-(1/2-1/p)}\le1$, we arrive at
\begin{equation}
    y\le K+\frac{K}{\sqrt{R}}\,y+K\,y^{\theta}.
\end{equation}
Choose $R_0\ge1$, depending only on $m$, $p$ and $C_p$, such that $K/\sqrt{R_0}\le\tfrac12$. For $R\ge R_0$ we may subtract $y/2$ from both sides (legitimate precisely because $y<\infty$, by \Cref{lem:tau_moments}) and obtain
\begin{equation}
    y\le2K\bigl(1+y^{\theta}\bigr).
\end{equation}
Because $\theta<1$, this inequality is self-improving and yields \cref{eq:uniform_moment}: either $y\le1$, or $y>1$, in which case $2K\le2Ky^{\theta}$, hence $y\le4Ky^{\theta}$ and $y^{1-\theta}\le4K$.

The theorem now follows by combining the three estimates. By applying the triangle inequality while adding and subtracting $M_\tau$,
\begin{equation}
    \left|\frac{\mathbb{E}[\tau^m]}{R^m}-P_m(\delta)\right|
    \le\mathbb{E}\!\left[\left|\left(\frac{\tau}{R}\right)^m-M_\tau\right|\right]
    +\bigl|\mathbb{E}[M_\tau]-P_m(\delta)\bigr| ,
\end{equation}
so substituting \cref{eq:uniform_moment} into \cref{eq:terminal_error,eq:optional_stopping} gives
\begin{align}
\label{eq:app_B12}
    \left|\frac{\mathbb{E}[\tau^m]}{R^m}-P_m(\delta)\right|
    &\le\frac{K}{R^{\frac12-\frac1p}}+\frac{K}{\sqrt{R}}
    \nonumber\\
    &\le\frac{K}{R^{\frac12-\frac1p}}
\end{align}
for all $R\ge R_0$. Renaming $K$ as $\kappa_{m,p}$, which depends only on $m$, $p$ and $C_p$ and in particular is independent of $R$ and $\delta$, completes the proof.

Note that this also implies the more general relation

\begin{equation}
    \frac{\mathbb{E}[\tau^m]}{R^m}-P_m(\delta) = O\left(R^{-\frac{1}{2}+\frac{1}{p}}\right)
\end{equation}

\end{proof}

\section{Proof of the Allan-variance relation}
\label{sec:app_allan}

To prove \cref{eq:allanR}, we first introduce the standard definition of the Allan variance.
Let $y(t)$ denote the fractional-frequency deviation of the clock (so the instantaneous tick rate is $1+y(t)$ in our time units) and define the average fractional frequency over the $n$th interval of length $\tau$ by

\begin{equation}
    \bar{y}_n := \frac{1}{\tau}\int_{n\tau}^{(n+1)\tau} y(t)\,dt.
\end{equation}

The Allan variance for an averaging time $\tau>0$ is then defined by the standard formula

\begin{equation}
    \label{eq:allan_standard}
    \sigma_A^2(\tau) := \frac{1}{2}\,\mathbb{E}\!\big[(\bar{y}_{n+1}-\bar{y}_n)^2\big],
\end{equation}

for any integer $n\ge 0$ (stationarity is usually assumed so the expression is independent of $n$).

Next, introduce the tick-counting function $\kappa(a,b)$, defined as the number of ticks produced by the clock between times $t=a$ and $t=b$.
Observe that, with unit nominal tick rate,

\begin{equation}
    \kappa(0,\tau(n+1))-\kappa(0,\tau n)
    = \int_{n\tau}^{(n+1)\tau}\big(1+y(t)\big)\,dt
    = \tau\big(1+\bar{y}_n\big).
\end{equation}

Hence the second difference appearing in the tick-counting formulation satisfies

\begin{equation}
    \kappa(0,\tau(n+2)) - 2\kappa(0,\tau(n+1)) + \kappa(0,\tau n)
    = \tau\big(\bar{y}_{n+1}-\bar{y}_n\big).
\end{equation}

Substituting this identity into the alternative definition

\begin{equation}
    \label{eq:allandef}
    \sigma_A^2(\tau) := \frac{\mathbb{E}\!\big[\big(\kappa(0,\tau(n+2)) - 2\kappa(0,\tau(n+1)) + \kappa(0,\tau n)\big)^2\big]}{2\tau^2}
\end{equation}

gives

\begin{equation}
    \sigma_A^2(\tau)
    = \frac{1}{2\tau^2}\,\mathbb{E}\!\big[\tau^2(\bar{y}_{n+1}-\bar{y}_n)^2\big]
    = \frac{1}{2}\,\mathbb{E}\!\big[(\bar{y}_{n+1}-\bar{y}_n)^2\big],
\end{equation}

which coincides with the standard definition \cref{eq:allan_standard}. Stationarity guarantees independence of $n$; if the nominal tick rate differs from unity, the factor $\tau$ should be replaced by the nominal tick rate times $\tau$.

Since we are working specifically with independent ticking clocks, $\kappa$ satisfies both additivity $\kappa(a,c)=\kappa(a,b)+\kappa(b,c)$
and time-translation invariance $\kappa(a,b)=\kappa(a+d,b+d)$. Using these properties, one may rewrite the numerator in \cref{eq:allandef} and expand the expectation to obtain

\begin{align}
    \label{eq:allan2}
    2\tau^2\sigma_A^2(\tau)
    &=\mathbb{E}\!\left[\big(\kappa(\tau n,\tau(n+1))-\kappa(\tau(n+1),\tau(n+2))\big)^2\right]\nonumber\\
    &=2\Big(\operatorname{Var}\!\big(\kappa(0,\tau)\big)-\mathrm{Cov}\!\big(\kappa(0,\tau),\kappa(\tau,2\tau)\big)\Big),
\end{align}

where in the second line we used stationarity to identify the two one-interval variances and then expanded the square and collected variance/covariance terms.

It is convenient to express the covariance in terms of variances. From

\begin{align}
&\operatorname{Var}\!\big(\kappa(0,2\tau)\big)
\\
&=\operatorname{Var}\!\big(\kappa(0,\tau)+\kappa(\tau,2\tau)\big)
\\
&=2\operatorname{Var}\!\big(\kappa(0,\tau)\big)+2\mathrm{Cov}\!\big(\kappa(0,\tau),\kappa(\tau,2\tau)\big)
\end{align}

we obtain

\begin{equation}
\mathrm{Cov}\!\big(\kappa(0,\tau),\kappa(\tau,2\tau)\big)
=\tfrac{1}{2}\Big(\operatorname{Var}\!\big(\kappa(0,2\tau)\big)-2\operatorname{Var}\!\big(\kappa(0,\tau)\big)\Big).
\end{equation}

Under the assumptions of independent ticking clocks (finite second moment of the ticking distribution, stationarity and a finite correlation time), $\operatorname{Var}\!\big(\kappa(0,t)\big)$ grows asymptotically linearly in $t$. Hence

\begin{equation}
\operatorname{Var}\!\big(\kappa(0,2\tau)\big)-2\operatorname{Var}\!\big(\kappa(0,\tau)\big)=o\big(\operatorname{Var}\!\big(\kappa(0,\tau)\big)\big),
\end{equation}

so that the covariance is negligible compared with the variance when $\tau$ is large compared with the process correlation time. Therefore, from \cref{eq:allan2} we have, for large $\tau$,

\begin{equation}
    \tau^2\sigma_A^2(\tau)=\operatorname{Var}\!\big(\kappa(0,\tau)\big)+o\big(\operatorname{Var}\!\big(\kappa(0,\tau)\big)\big).
\end{equation}

Finally, note that in our notation $\kappa(0,t)=\tau(t)$ (tick count up to time $t$). Therefore, by \cref{eq:app_A3},
$\operatorname{Var}\!\big(\kappa(0,\tau)\big)=\dfrac{\sigma^2}{\mu^3}\,\tau+O(1)$, and we obtain the large-$\tau$ behaviour

\begin{equation}
    \sigma_A^2(\tau)\sim\frac{\sigma^2}{\mu^3}\,\frac{1}{\tau}.
\end{equation}

This completes the derivation (under the stated hypotheses of stationarity, finite second moments and linear variance growth).

\qed

% The \nocite command causes all entries in a bibliography to be printed out
% whether or not they are actually referenced in the text. This is appropriate
% for the sample file to show the different styles of references, but authors
% most likely will not want to use it.
\nocite{*}
\AtBeginEnvironment{thebibliography}{\selectfont}
\bibliography{apssamp}% 

\end{document}